\documentclass[11pt]{article}
\usepackage[T1]{fontenc}
\usepackage{lmodern,microtype}
\usepackage[margin=0.95in]{geometry}
\usepackage{amsmath,amssymb,amsthm,mathtools}
\usepackage{booktabs,tabularx,array}
\usepackage[dvipsnames]{xcolor}
\usepackage{enumitem,needspace,float}
\usepackage{caption,fancyhdr,titlesec}
\usepackage{hyperref}
\definecolor{Ink}{HTML}{17324D}
\definecolor{Teal}{HTML}{176B78}
\definecolor{Gray}{HTML}{667680}
\definecolor{Rule}{HTML}{D8E0E4}
\hypersetup{colorlinks=true,linkcolor=Teal,citecolor=Teal,urlcolor=Teal,
 pdftitle={Removing m from m-pile Divisor Nim},
 pdfauthor={Satyam Tyagi}}
\newtheorem{theorem}{Theorem}[section]
\newtheorem{lemma}[theorem]{Lemma}
\newtheorem{proposition}[theorem]{Proposition}
\newtheorem{corollary}[theorem]{Corollary}
\DeclareMathOperator{\mex}{mex}

\newcommand{\join}{\sqcup}
\newcolumntype{Y}{>{\raggedright\arraybackslash}X}
\newcolumntype{L}[1]{>{\raggedright\arraybackslash}p{#1}}

\titleformat{\section}{\Large\bfseries\color{Ink}}{\thesection}{0.6em}{}
\titleformat{\subsection}{\large\bfseries\color{Teal}}{}{0pt}{}
\titlespacing*{\section}{0pt}{15pt}{7pt}
\titlespacing*{\subsection}{0pt}{10pt}{5pt}
\setlist[itemize]{leftmargin=1.4em,itemsep=2pt,topsep=3pt}
\renewcommand{\headrulewidth}{0.4pt}
\renewcommand{\headrule}{\hbox to\headwidth{\color{Rule}\leaders\hrule height\headrulewidth\hfill}}
\allowdisplaybreaks[1]
\begin{document}
\begin{center}
 {\fontsize{23}{27}\selectfont\bfseries\color{Ink}
 Removing $m$ from $m$-pile Divisor Nim\par}
 \vspace{9pt}
 Satyam Tyagi
\end{center}
\vspace{2pt}

\begin{abstract}
In Divisor Nim, a player removes from one heap a positive number of stones that
divides the size of every other heap. We prove that the Sprague--Grundy (SG) value
satisfies $g(P)\le 2h_{\min}+\lfloor\log_2h_{\min}\rfloor$, where $h_{\min}$ is the
smallest nonempty heap, independently of the number and sizes of the other heaps.
The proof starts with odd heaps and classifies removals by their factors of $2$.
The largest heap in a starting board then supplies an SG bound valid throughout
every successor chain.

When one heap varies, we sharpen Morisawa's eventual-period bound: the smallest
fixed heap controls the SG value range, the number of consecutive values compared,
and an increase in the varying heap that repeats legal moves. The remaining
dependence on the fixed board lies in the SG periods of direct successors. Using
the largest fixed heap, we derive an explicit common period for all these recursive
layers. Its possible prime divisors belong to a fixed finite set, independent of
heap count; their exponents remain controlled by the total fixed stones.
Finally, joining boards under one shared divisor rule preserves and can tighten
the SG ceiling. Minimum 2-adic depth and its count parity determine every joined
outcome, and a join with at least two odd heaps has exact SG value $0$ or $1$.

We close with two conjectures: $g(P)\le2h_{\min}$, and that the least eventual SG
period, when one heap varies, divides $2\operatorname{lcm}(1,\ldots,h_{\max})$.
Here $h_{\max}$ is the largest fixed heap, excluding the varying heap. The second
claim would remove heap-count dependence from the period bound as well.
\end{abstract}

\section{Can one small heap control a large board?}
\label{sec:intro}

A board contains a heap of size $1$. We add many more heaps, each as
large as we wish. Can the Grundy value grow without limit? In Divisor
Nim, it cannot. While the heap $1$ remains unchanged, every removal
from another heap must divide $1$ and therefore must equal $1$.
We will show that the value stays at most $2$, then extend the argument
to any smallest heap.

The same idea leads to three questions. How large can the Grundy
value be? How long can a period of these values be when one heap
varies? What information survives when boards are joined? We address
them in that order.

\subsection{The rules and the value being bounded}

A position $P=(h_1,\ldots,h_m)$ lists the positive heap sizes ($h_i \ge 1$).
A move subtracts a positive integer $d$ from one heap $h_i$, subject to
\begin{equation}
 d\le h_i,\qquad d\mid h_j\quad\text{for every }j\ne i.
 \label{eq:rule}
\end{equation}
Here $d\mid h_j$ means that $d$ divides $h_j$. The removal need not
divide the heap being changed. Empty heaps impose no restriction,
since every positive integer divides $0$; we omit them from the displayed
board. Thus $m$ denotes the current number of nonempty heaps.
If only one heap remains then every removal up to its size is permitted.
Players alternate, and the player with no legal move loses.
These are Morisawa's rules for $m$-pile Divisor Nim
\cite{Morisawa2026}.

A \emph{successor} is a board left after one legal move.
The \emph{Sprague--Grundy (SG) value} of a board is defined recursively
\cite{Siegel2013}. The empty board has value $0$. For another board, list the values of its successors and choose the smallest nonnegative integer missing from the list:
\begin{equation}
 g(P)=\mex\{g(Q):P\longrightarrow Q\}.
 \label{eq:mex}
\end{equation}
The notation $\mex$ means the smallest missing nonnegative integer.
A board is losing exactly when its SG value is $0$, and winning
when its SG value is positive. A sole heap of size $h$ has
SG value $h$.

The distinction between moves and values is essential here.
A thousand moves to boards with SG value $0$ still supply only
one value to the mex.

Even a very large successor SG value need not make the current
board's SG value large. For example, the only successors of
$(1,100)$ are $(100)$ and $(1,99)$, with SG values $100$ and $0$.
Therefore
\[
 g(1,100)
 =\mex\{g(100),g(1,99)\}
 =\mex\{100,0\}
 =1.
\]

\subsection{Morisawa's bounds and periodicity}

\paragraph{The SG bound.}
Morisawa counts possible removals from a reference heap $h_i$
and from the other $m-1$ heaps. Each move supplies at most one
successor SG value, so the total number of moves bounds the mex.

\begin{center}
\begin{tabularx}{\linewidth}{@{}L{0.29\linewidth}Y@{}}
\toprule
Quantity and meaning & Expression and reason \\
\midrule
$h_i$\newline
Size of the reference heap
  & A removal must fit in the heap: $1\le d\le h_i$.
    There are therefore at most $h_i$ possible removals. \\
\addlinespace[9pt]

$\tau(h_i)$\newline
Number of positive divisors of $h_i$
  & Every removal from another heap must divide $h_i$.
    There are therefore at most $\tau(h_i)$ possible removal
    amounts from each other heap. \\
\addlinespace[9pt]

$(m-1)\tau(h_i)$\newline
Bound for all other heaps
  & There are $m-1$ other heaps, each permitting at most
    $\tau(h_i)$ removals. Multiplying gives their combined bound. \\
\bottomrule
\end{tabularx}
\end{center}

Adding the contributions and choosing the reference heap that
gives the smallest bound yields Morisawa's general SG bound
\cite[Lemma 5.2]{Morisawa2026}:
\begin{equation}
 \boxed{
 g(P)\le\min_i\bigl[h_i+(m-1)\tau(h_i)\bigr].
 }
 \label{eq:oldbound}
\end{equation}

For a simpler expression, use the smallest heap, $h_{\min}$.
Since $\tau(h_{\min})\le h_{\min}$,
\[
 g(P)\le h_{\min}+(m-1)h_{\min}=mh_{\min}.
\]

\paragraph{The period bound.}
The board $(A,n)$ consists of fixed heaps $A$ and one varying
heap of size $n$. An \emph{eventual period} $\omega_A$ is a
positive increase in the varying heap size that preserves the
SG value from some starting size $n$ onward:
\[
 \boxed{
 g(A,n+t)=g(A,n+\omega_A+t)
 \qquad\text{for every integer }t\ge0.
 }
\]
The smallest such positive increase is the \emph{least eventual period}.
We use $\omega_A$ for this least period when stating divisibility claims;
any positive multiple of it is also an eventual period.
Morisawa proves that an eventual period exists whenever at least
one fixed heap is nonempty
\cite[Theorem 2.5]{Morisawa2026}.

For the construction below, $A=(h_1,\ldots,h_{m-1})$ contains
at least two nonempty heaps. The number $m$ includes the varying
heap, and $h_{\min}$ is the smallest fixed heap.

Morisawa's period bound combines the number of possible blocks
of consecutive SG values with the comparison spacing $\omega'$.

A move on the varying heap reaches $(A,n-d)$, using an earlier
SG value supplied by the block. A move on a fixed heap produces
a smaller fixed board $Q$ and reaches $(Q,n)$. The comparison
spacing makes legal moves and these smaller-board SG values
repeat together.

\begin{center}
\begin{tabularx}{\linewidth}{@{}L{0.29\linewidth}Y@{}}
\toprule
Quantity and meaning & Expression and reason \\
\midrule
$0,\ldots,mh_{\min}$\newline
SG value range
  & The SG bound gives $g(A,n)\le mh_{\min}$.
    Including zero gives $mh_{\min}+1$ possibilities
    for each entry in a block. \\
\addlinespace[9pt]

$M_A$\newline
Increase in $n$ that repeats legal moves
  & Omit each fixed heap in turn, take the GCD of the remaining
    fixed heaps, then take their LCM:
    \[
      M_A=\operatorname{lcm}_{1\le i\le m-1}
           \gcd\{h_j:1\le j\le m-1,\ j\ne i\}.
    \]
    Every possible removal size divides $M_A$. Thus
    $d\mid n$ exactly when $d\mid n+M_A$, and increasing
    a sufficiently large $n$ by $M_A$ preserves legal moves. \\
\addlinespace[9pt]

$M_A+1$\newline
Number of consecutive SG values compared
  & Morisawa compares blocks
    \[
      \bigl(g(A,n),\ldots,g(A,n+M_A)\bigr).
    \]
    Every removal from the varying heap is at most $M_A$,
    so these entries cover all earlier values needed
    to continue the comparison. \\
\addlinespace[9pt]

$\omega_{Q_1},\ldots,\omega_{Q_t}$\newline
SG periods for all smaller fixed boards
  & The list $Q_1,\ldots,Q_t$ contains every nonempty fixed
    board with fewer total stones than $A$ and at most
    $m-1$ heaps. Morisawa first establishes an eventual
    period $\omega_{Q_i}$ for each sequence $g(Q_i,n)$.
    This covers the values reached by moves on fixed heaps. \\
\addlinespace[9pt]

$\omega'$\newline
Comparison spacing
  & The required common multiple is
    \[
      \omega'=\operatorname{lcm}
      \bigl(M_A,\omega_{Q_1},\ldots,\omega_{Q_t}\bigr).
    \]
    It repeats legal moves and all the smaller-board
    SG values together, once their repeating parts have begun. \\
\bottomrule
\end{tabularx}
\end{center}

There are at most $(mh_{\min}+1)^{M_A+1}$ different blocks.
Comparing one more block than this number, at intervals of
$\omega'$, guarantees two matching blocks. Both types of moves
then supply matching SG values, so the agreement continues
indefinitely. Their separation is an eventual period, giving
Morisawa's counting bound
\cite[Section 8]{Morisawa2026}:
\begin{equation}
 \boxed{
 \omega_A\le
 \underbrace{\omega'}_{\text{comparison spacing}}
 \underbrace{(mh_{\min}+1)^{M_A+1}}_{\text{number of possible blocks}}.
 }
 \label{eq:oldperiodbound}
\end{equation}

\paragraph{Our contribution.}
We obtain an SG bound that depends only on the smallest heap,
with no dependence on the number or sizes of the remaining heaps.

For periodicity, the smallest fixed heap bounds the SG value
range, the number of consecutive SG values compared, and
an increase in $n$ that repeats legal moves. The remaining
dependence on the other heaps is confined to the SG periods
of direct successor boards.

We then replace these unknown periods with a common-period construction.
The largest fixed heap keeps the same bounds valid throughout the
successor chains. Each recursive layer can add a bounded multiplier;
the total fixed stones bound how many layers are needed. This gives
an explicit period and a fixed finite set of possible prime divisors.
Their exponents may still grow with the number of layers.
The construction also gives a period bound for joined boards.

The title refers to removing the heap-count parameter $m$ from
Morisawa's bounds. We achieve this for the SG value. For the period,
we remove $m$ from the local terms and identify the remaining
dependence through recursive layers. Our conjecture proposes the
common period $2\operatorname{lcm}(1,\ldots,h_{\max})$, depending
only on the largest fixed heap.
Morisawa's two papers provide the starting point
\cite{Morisawa2025,Morisawa2026}; the odd-heap cases lead to our
bounds for all-even boards with arbitrarily many heaps.

\section{Bounding Grundy values}
\label{sec:values}

The SG value is bounded by the number of distinct successor
values, rather than the number of moves. Many moves reaching
the same value contribute only once.

We start with positions containing odd heaps. For all-even
positions, we then classify removals by their factors of $2$.
Some moves reach positions already bounded, others reach value
$0$, and the remaining moves are few enough to count. Adding
these contributions gives the SG bound.

\subsection{The starting cases: odd heaps}

An unchanged odd heap forbids every even removal. Therefore,
when at least two heaps are odd, every legal removal is odd
and changes the number of odd heaps by one.

\begin{lemma}[Odd heaps]
\label{lem:odd}
Among positions containing an odd heap, those with an even
number of odd heaps are losing. Those with an odd number are
winning. When at least two heaps are odd, the exact SG values are:
\begin{center}
\begin{tabularx}{0.9\linewidth}{@{}X c@{}}
\toprule
Number of odd heaps & SG value \\
\midrule
Even, and at least two & $0$ \\
Odd, and at least three & $1$ \\
\bottomrule
\end{tabularx}
\end{center}
\end{lemma}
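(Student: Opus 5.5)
The plan is a single strong induction on the total number of stones, proving the losing/winning claims and the exact values together. The key observation comes from the remark before the lemma. Suppose the position has $k\ge 2$ odd heaps and a move changes heap $h_i$. Then some odd heap $h_j$ with $j\ne i$ is left unchanged, so every legal removal $d$ divides an odd number and is itself odd. Subtracting an odd $d$ flips the parity of the changed heap, so the number of odd heaps becomes $k-1$ or $k+1$. This includes moves that empty a heap, since $0$ is even and empty heaps are omitted. Moreover, $k\pm1\ge 1$, so every successor still contains an odd heap and falls under the induction hypothesis.

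\textbf{Even $k\ge 2$.} Every successor has an odd number $k\pm1\ge 1$ of odd heaps and fewer stones. By induction every successor is winning, so no successor has SG value $0$, and $g(P)=0$. This also holds trivially if $P$ has no successors.

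\textbf{Odd $k\ge 3$.} Every successor has an even number $k\pm1\ge 2$ of odd heaps. By the previous case, which applies at smaller total by induction, each successor has SG value exactly $0$. There is at least one move, for instance removing $1$ from any heap, which is always legal because $1$ divides everything. Hence the set of successor values is exactly $\{0\}$, and $g(P)=\mex\{0\}=1$.

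\textbf{$k=1$.} Here the exact value is not claimed; we only need to show $P$ is winning. If $P$ is a single odd heap, removing it entirely wins. Otherwise some other heap is even. Removing $1$ from that even heap is legal, since $1$ divides every heap. The result has exactly two odd heaps, so by the even case (at smaller total) it has value $0$. Thus $P$ has a successor of value $0$ and is winning.

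The only delicate point is the $k=1$ case. There the "all removals are odd" argument fails: removals from the odd heap itself may be even, and removing $1$ from the odd heap can produce an all-even position, which the lemma does not cover. I would handle this by choosing the explicit move onto an even heap, as above. This is also why the lemma only asserts the exact value $1$ when $k\ge 3$: with $k=1$, successors of the lone odd heap can be all-even positions with unknown values. Everything else is bookkeeping that the parity count stays $\ge 1$ along every move, so the induction hypothesis always applies.
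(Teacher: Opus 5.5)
Your proof is correct, and it differs from the paper's mainly in where the win/loss claims come from.

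The paper does not prove the losing/winning part itself. It quotes it as the odd-heap case of Morisawa's general criterion (his Theorem~2.2, restated later as Proposition~\ref{prop:halving}): a position is losing exactly when the number of heaps at minimum 2-adic depth is even. When an odd heap is present, the minimum depth is $0$ and that count is just the number of odd heaps. The paper then derives the exact value $1$ the same way you do: every move from an odd count $\ge 3$ lands on a positive even count, hence on value $0$, and $\mex\{0\}=1$.

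You instead reprove the odd-heap outcome criterion from scratch by strong induction on total stones. For $k\ge 2$ you use the observation that every removal is odd and flips the parity of exactly one heap. For $k=1$ you use the explicit move that takes one stone from an even heap, reaching two odd heaps. Your route is self-contained and shows that this case needs none of the 2-adic machinery. In particular, your choice of winning move for $k=1$ avoids the all-even successors that the general criterion must handle.

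The paper's route is shorter and costs nothing extra, since Section~\ref{sec:values} relies on Morisawa's full criterion anyway for all-even positions.
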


\begin{proof}
The outcome statement is the odd-heap case of Morisawa's
criterion \cite[Theorem 2.2]{Morisawa2026}.
From an odd number of at least three odd heaps, every move
leaves a positive even number of odd heaps. Every successor
therefore has value $0$, so the current value is
$\mex\{0\}=1$.
\end{proof}

Exactly one odd heap requires a bound rather than this exact
formula. Moves on any other heap create a second odd heap and
reach value $0$. Only moves on the odd heap can contribute
additional values.

The symbol $H$ denotes a fixed upper limit on the size of one
positive heap. Keeping $H$ fixed lets the same bound apply after
that heap shrinks. Every legal removal then satisfies
\begin{equation}
 d\le H.
 \label{eq:smallremoval}
\end{equation}
Indeed, a removal from that small heap must fit inside it.
A removal from another heap must divide the unchanged small
heap, so it also cannot exceed its size.

\begin{lemma}[The starting bound]
\label{lem:base}
If a position contains an odd heap and a positive heap of
size at most $H$, then
\[
 g(P)\le H+1.
\]
\end{lemma}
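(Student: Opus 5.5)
The bound follows from counting the distinct successor values, not the moves, and splitting by the number of odd heaps. Let $k\ge1$ be the number of odd heaps in $P$. By Lemma~\ref{lem:odd}, if $k\ge2$ then $g(P)\in\{0,1\}\le H+1$, so only the case $k=1$ needs work. Let $o$ be the unique odd heap. The fixed limit $H$ refers to some positive heap $s$ with $s\le H$; possibly $s=o$. By \eqref{eq:smallremoval}, every legal removal $d$ satisfies $d\le H$.

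Next I would classify the moves from $P$ when $k=1$. Consider a move on a heap other than $o$. Such a heap is even, and $d$ must divide the unchanged odd heap $o$, so $d$ is odd. The moved heap becomes odd, and the successor has exactly two odd heaps. Lemma~\ref{lem:odd} then gives that successor value $0$. If the board has only the heap $o$ besides empties, then $g(P)=o$; this needs $o\le H$, which holds because $s$ is the only heap, so $s=o$. Hence all moves off $o$ contribute at most the single value $0$.

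The remaining moves are on $o$ itself, with removals $d\le H$ by \eqref{eq:smallremoval}. There are at most $H$ such moves, so they contribute at most $H$ values. Together with the value $0$, the successor set has at most $H+1$ distinct values. The mex of a set of size at most $H+1$ is at most $H+1$, which gives $g(P)\le H+1$.

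The only delicate point is making sure \eqref{eq:smallremoval} applies to moves on $o$ when $o\ne s$. A removal from $o$ must divide the unchanged heap $s\le H$, so $d\le H$. When $o=s$, $d\le o\le H$ directly. I would also check the degenerate case where the move on $o$ empties it: empty heaps are omitted and this is still a single successor per $d$, so the count is unaffected.
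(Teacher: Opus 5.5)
Your proof is correct and is essentially the paper's argument. You use Lemma~\ref{lem:odd} when at least two heaps are odd. With a unique odd heap, you note that every move elsewhere creates a second odd heap and so has value $0$, while at most $H$ removals on the odd heap are allowed by \eqref{eq:smallremoval}, and you treat the sole-heap case directly. Your added checks, that the moved even heap becomes odd and hence nonempty and why $d\le H$ holds for removals from the odd heap, only make explicit what the paper leaves implicit.
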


\begin{proof}
If at least two heaps are odd, Lemma~\ref{lem:odd} gives
$g(P)\le1$.

If exactly one heap is odd and other heaps remain, all moves
on those other heaps reach value $0$. There are at most $H$
removal sizes available on the odd heap, by
\eqref{eq:smallremoval}. Thus there are at most $H+1$
distinct successor values, giving $g(P)\le H+1$.

If only one heap remains, its size is at most $H$, and its
SG value equals its size.
\end{proof}

\subsection{Classifying moves by 2-adic depth}

The \emph{2-adic depth} $v_2(a)$ of a positive integer $a$
is the number of factors of $2$ in it:
\[
 a=2^{v_2(a)}u,
 \qquad u\text{ odd}.
\]
Equivalently, this counts how many times $a$ can be halved
before becoming odd.

For a nonempty position, its minimum depth is $k(P)$, and
the number of heaps at that depth is $r(P)$:
\begin{equation}
 k(P)=\min_i v_2(h_i),
 \qquad
 r(P)=\#\{i:v_2(h_i)=k(P)\}.
 \label{eq:depth}
\end{equation}
Dividing every heap by $2^{k(P)}$ exposes exactly $r(P)$
odd heaps.

The smallest heap need not have minimum depth. For example,
in $(4,6)$ the smallest heap is $4$, but $6$ has the smaller
2-adic depth.

\begin{proposition}[Morisawa's outcome criterion]
\label{prop:halving}
A nonempty position is losing exactly when the number of heaps
at minimum depth is even:
\begin{equation}
 \boxed{
 g(P)=0
 \quad\Longleftrightarrow\quad
 r(P)\text{ is even}.
 }
 \label{eq:outcome}
\end{equation}
Consequently, dividing all heaps of an all-even position by
$2$ preserves its winning or losing outcome.
\end{proposition}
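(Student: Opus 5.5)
We prove the criterion \eqref{eq:outcome} directly, by exhibiting the set of losing positions and checking the two closure properties that characterize it. Let $\mathcal{L}$ be the set of nonempty positions with $r(P)$ even, together with the empty board. It suffices to show two things:
\begin{itemize}
\item every move from a position in $\mathcal{L}$ leaves $\mathcal{L}$;
\item every nonempty position outside $\mathcal{L}$ has a move into $\mathcal{L}$.
\end{itemize}
The outcome statement then follows by induction on the total number of stones. The final sentence about halving is immediate: if all heaps are even, then $k$ drops by one under halving, while the set of heaps at minimum depth, and hence $r$, is unchanged.

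\textbf{Moves out of $\mathcal{L}$.} Take $P$ with $r=r(P)\ge2$ even, and write $k=k(P)$. A removal $d$ must divide every other heap, and at least one other heap has depth $k$. Hence $v_2(d)\le k$.

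If $v_2(d)<k$, then the changed heap $h_i-d$ has depth exactly $v_2(d)<k$. The heap is nonempty, since $d<h_i$ follows from $v_2(d)<v_2(h_i)$. So the new minimum depth is $v_2(d)$, attained only by this heap, and $r=1$ is odd.

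If $v_2(d)=k$, split on the depth of the changed heap. When $v_2(h_i)=k$, the difference $h_i-d$ has depth greater than $k$ or is zero. The other $r-1\ge1$ heaps at depth $k$ remain, so the new count is $r-1$, which is odd. When $v_2(h_i)>k$, the new heap has depth exactly $k$, so the count becomes $r+1$, which is odd.

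In every case the successor lies outside $\mathcal{L}$.

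\textbf{Moves into $\mathcal{L}$.} Now take $r(P)$ odd.

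If $r=1$ and $m=1$, remove the whole heap to reach the empty board.

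If $r=1$ and $m\ge2$, let $h_i$ be the unique heap at depth $k$. Remove $d=2^{k+1}\cdot\gcd$ of the odd parts? That choice need not be $\le h_i$, so it is simpler to aim for a count of $0$ or $2$. I would remove $d=2^k$ from $h_i$, which is legal because every other heap has depth greater than $k$. The new heap $h_i-2^k$ has depth greater than $k$, or is zero. The minimum depth then rises, and the new count must be shown even; this is the delicate point. The cleaner route is a descent: divide everything by $2^k$, which preserves legality, since legal moves of $P$ correspond to legal moves of the halved board scaled by $2^k$ once $d$ is a multiple of $2^k$. Moves with $v_2(d)<k$ are never the winning ones, so this reduces to the case $k=0$.

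With $k=0$, the odd heaps number $r$, which is odd. If $r\ge3$, any odd removal of $1$ from an odd heap makes it even and leaves $r-1$ odd heaps, an even count; this move is legal because $1$ divides everything. If $r=1$, a move must still produce an even count. The move I would try is on an even heap: remove $d=1$ from an even heap $h_j$. This is legal and produces a second odd heap, giving $r=2$. Such an even heap exists whenever $m\ge2$.

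\textbf{Main obstacle.} The expected obstacle is making the scaling reduction rigorous. One must argue that the halving map of an all-even position is a bijection onto legal moves whose $d$ is even. Since the target condition only concerns $d$ with $v_2(d)\ge k$, the construction above needs only such moves, and the reduction goes through.
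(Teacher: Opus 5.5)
Your argument is correct, and it takes a different route from the paper. The paper does not prove the criterion itself: it cites Morisawa's Theorem~2.2 and adds only the remark that common halving lowers every depth by one and keeps $r$ unchanged, which is also your final sentence. You instead verify directly that the set $\mathcal{L}$ (boards with $r(P)$ even, plus the empty board) satisfies both closure properties of the losing positions, then conclude by induction on the total number of stones. Your case analysis for moves out of $\mathcal{L}$ is the same computation the paper later uses in Table~\ref{tab:moves}: a removal with $v_2(d)<k$ creates a unique shallower heap, and one with $v_2(d)=k$ shifts the count at depth $k$ by one. So your route makes the section self-contained and shows where that table's mechanism comes from; the paper's citation buys only brevity.

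Two points about the write-up. First, delete the abandoned attempts in the case $r=1$, $m\ge2$. Removing $2^k$ from the unique minimum-depth heap is the wrong move in general: from $(2,4)$ it reaches $(4)$, which is winning. Second, the descent is not needed. Every heap is divisible by $2^k$, so $2^k$ is always a legal removal size. Removing it from a heap of depth $>k$ when $r=1$ and $m\ge2$ gives count $2$ at depth $k$. Removing it from a heap of depth $k$ when $r\ge3$ gives count $r-1\ge2$. Your scaling argument is still valid, because it lifts exactly the move $d'=1$ on the halved board to $d=2^k$, and common scaling preserves both legality and $r$.
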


\begin{proof}
The criterion is Morisawa's Theorem~2.2
\cite{Morisawa2026}. Common halving decreases every heap's
depth by one and leaves the number attaining the minimum
unchanged.
\end{proof}

Common halving need not preserve the full SG value. For example,
\[
 g(2,4)=3,
 \qquad
 g(1,2)=1.
\]
To bound the full value, we account for the different removals.

Consider an all-even winning position. Its minimum depth
$k=k(P)$ is positive, and $r(P)$ is odd. A removal of size
$d$ has depth $j=v_2(d)$. Its contribution is determined by
how $j$ compares with $k$ and which heap is changed.

\begin{table}[H]
\centering
\begin{tabularx}{\linewidth}{@{}L{0.30\linewidth}Y@{}}
\toprule
Condition and meaning & Successor and reason \\
\midrule
$j<k$\newline
Removal below the minimum depth
  & The changed heap satisfies $v_2(h_i-d)=j$.
    After division by $2^j$, an even integer loses an odd
    integer, leaving an odd remainder. The successor has
    lower minimum depth, where we use the earlier bound. \\
\addlinespace[9pt]
$j=k$\newline
A minimum-depth heap remains unchanged
  & Division by $2^k$ makes the removal odd, so the count
    at depth $k$ changes by one. An unchanged heap keeps
    that count positive. It is now even, so the successor
    has SG value $0$. \\
\addlinespace[9pt]
$r(P)=1,\quad 2^k\mid d$\newline
Removal from the unique minimum-depth heap
  & The preceding two arguments do not bound this successor's
    value. We instead count the moves: each removal amount
    contributes at most one additional SG value. \\
\bottomrule
\end{tabularx}
\caption{The three types of moves from an all-even winning position.}
\label{tab:moves}
\end{table}

A removal with $j>k$ cannot divide an unchanged heap at depth
$k$. It must therefore change the unique heap at that depth,
so it belongs to the third row. The three rows cover every move.

\subsection{Counting the contributions using one small heap}

The fixed size limit $H$ controls both parts of the argument.
It bounds the removal sizes, and a positive heap of size at
most $H$ survives every move to which we apply the lower-depth
bound.

To see this, suppose the current position contains a heap
$a\le H$ and a removal has depth $j<k$. If another heap is
changed, $a$ remains. If $a$ is changed, deleting it would
require $d=a$, which is impossible because
\[
 v_2(d)=j<k\le v_2(a).
\]
A positive remainder smaller than $a$ therefore survives.

Now consider the third row of Table~\ref{tab:moves}.
These moves all act on one particular heap: the unique heap
of minimum depth. Their removal sizes are multiples of
$2^k$ and cannot exceed $H$. Hence
\begin{equation}
 \text{number of these moves}
 \le
 \left\lfloor\frac{H}{2^k}\right\rfloor.
 \label{eq:exceptioncount}
\end{equation}
Each size gives at most one move because the target heap is
fixed. This is why the count has no factor for the number
of heaps.

These moves may delete the small heap or reach a board with
a large SG value. We do not calculate their successor values:
each move contributes at most one value.

Write $U_j(H)$ for the bound we construct at minimum depth
$j$, for positions containing a positive heap of size at
most $H$. The odd-heap result supplies the starting value
\[
 U_0(H)=H+1.
\]

At depth $k\ge1$, we add two counts: the possible values already
covered at lower depths, including $0$, and the additional values
from removals on the unique minimum-depth heap.

\begin{table}[H]
\centering
\begin{tabularx}{\linewidth}{@{}L{0.30\linewidth}Y@{}}
\toprule
Quantity and meaning & Expression and reason \\
\midrule
$U_{k-1}(H)+1$\newline
Number of values in the lower-depth range
  & The lower-depth successors and losing successors have
    values in $0,\ldots,U_{k-1}(H)$. Including zero gives
    at most $U_{k-1}(H)+1$ distinct values. \\
\addlinespace[9pt]
$\lfloor H/2^k\rfloor$\newline
Number of multiples of $2^k$ up to $H$
  & These removals are multiples of $2^k$ no larger than $H$.
    There are at most $\lfloor H/2^k\rfloor$ such amounts.
    Each acts on the same heap and contributes at most one
    successor value. \\
\bottomrule
\end{tabularx}
\caption{The two contributions to the bound at depth $k$.}
\label{tab:recursionterms}
\end{table}

The mex is at most the number of distinct successor values.
Adding the two contributions gives
\begin{equation}
 U_0(H)=H+1,
 \qquad
 \boxed{
 U_k(H)=
 \underbrace{U_{k-1}(H)+1}_{\text{lower-depth range, including }0}
 +\underbrace{\left\lfloor\frac{H}{2^k}\right\rfloor}_{
 \text{additional values}}.
 }
 \label{eq:recursion}
\end{equation}

The recursion follows decreasing 2-adic depth, not a sequence
of moves in actual play. We apply it only to the first row
of Table~\ref{tab:moves}; the other rows are accounted for
without further recursion.

\begin{theorem}[A bound independent of heap count]
\label{thm:bound}
Every nonempty position containing a heap of size at most $H$
satisfies
\[
 g(P)\le U_{k(P)}(H).
\]
If the number $r(P)$ of heaps at minimum depth is odd and at
least three, then
\begin{equation}
 g(P)\le
 U_{k(P)}(H)
 -
 \left\lfloor\frac{H}{2^{k(P)}}\right\rfloor.
 \label{eq:refined}
\end{equation}
\end{theorem}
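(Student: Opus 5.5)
We argue by strong induction on the minimum depth $k=k(P)$, with the limit $H$ held fixed throughout. The hypothesis is that $P$ contains a positive heap $a\le H$. The base case $k=0$ means $P$ contains an odd heap, and Lemma~\ref{lem:base} gives $g(P)\le H+1=U_0(H)$ directly.

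For $k\ge1$, every heap of $P$ is even. If $r(P)$ is even, Proposition~\ref{prop:halving} gives $g(P)=0$, which is within every bound. So we may assume $r(P)$ is odd. We then sort each move by its removal depth $j=v_2(d)$, following Table~\ref{tab:moves}, and bound the set of successor values row by row.

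\textbf{Moves with $j<k$.} By the survival argument preceding \eqref{eq:exceptioncount}, the successor $Q$ still contains a positive heap of size at most $H$. The changed heap has depth exactly $j$, so $k(Q)=j\le k-1$. The induction hypothesis gives $g(Q)\le U_j(H)$. From the recursion \eqref{eq:recursion}, $U_j(H)$ is nondecreasing in $j$, so $g(Q)\le U_{k-1}(H)$.

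\textbf{Moves with $j=k$ leaving a minimum-depth heap unchanged.} Divide everything by $2^k$. The removal becomes odd, so the count of heaps at depth $k$ changes by one and becomes even. The unchanged minimum-depth heap keeps that count positive, and the changed heap cannot drop below depth $k$, because its depth after the move is at least $\min(k,v_2(h_i))$ with the usual parity considerations. Hence $k(Q)=k$ with $r(Q)$ even, and Proposition~\ref{prop:halving} gives value $0$.

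I expect this parity bookkeeping to be the main point to check. The removed amount $d/2^k$ is odd. A heap at depth exactly $k$ becomes $(\text{odd}-\text{odd})2^k$, which has depth greater than $k$ or is empty. A heap at depth greater than $k$ becomes depth exactly $k$. Either way the count changes by one, and it stays positive whenever $r(P)\ge 2$ or the changed heap was not the unique minimum-depth heap.

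\textbf{Remaining moves.} Any move not yet covered has $j\ge k$ and changes the unique heap of minimum depth, so this case requires $r(P)=1$. If $j>k$, the removal cannot divide an unchanged heap at depth $k$, so it must act on that heap; the same holds when $j=k$ and $r(P)=1$. All these moves act on one fixed heap with removal sizes that are multiples of $2^k$ and at most $H$, by \eqref{eq:smallremoval}. There are at most $\lfloor H/2^k\rfloor$ of them by \eqref{eq:exceptioncount}, and each contributes at most one value.

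\textbf{Combining the rows.} The successor values lie in $\{0,\dots,U_{k-1}(H)\}$ together with at most $\lfloor H/2^k\rfloor$ extra values. A mex of a set $S$ is at most $|S|$, so
\[
g(P)\le U_{k-1}(H)+1+\lfloor H/2^k\rfloor=U_k(H).
\]

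\textbf{The refinement \eqref{eq:refined}.} If $r(P)\ge3$ is odd, there is no unique minimum-depth heap. A move with $j>k$ would have to fail to divide some unchanged heap at depth $k$, since at least two such heaps remain, so no such move is legal. Every move with $j=k$ leaves at least two heaps at depth $k$ unchanged, so it falls in the value-$0$ row. The third row is therefore empty, and the count drops by $\lfloor H/2^k\rfloor$.
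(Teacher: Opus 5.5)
Your proof is correct and follows essentially the same route as the paper: induction on the minimum depth, with Lemma~\ref{lem:base} as the base case, the three-row classification of Table~\ref{tab:moves}, and the observation that the third row is empty when $r(P)\ge3$ is odd. The only loose end is that your refinement step is phrased through $U_{k-1}(H)$ and so covers only $k\ge1$; the depth-zero case of \eqref{eq:refined} follows at once from Lemma~\ref{lem:odd}, which gives $g(P)=1=U_0(H)-H$, exactly as the paper notes.
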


\begin{proof}
A sole heap has value equal to its size, at most $H$, so it
satisfies the bound at every depth.

For positions with more heaps, induct on the minimum depth.
At depth zero, Lemma~\ref{lem:base} gives
$g(P)\le H+1=U_0(H)$.

At depth $k\ge1$, a losing position has value $0$.
For a winning position, every lower-depth successor retains
a positive heap of size at most $H$. By induction and the
fact that $U_j(H)$ increases with $j$, its value is at most
$U_{k-1}(H)$. The other moves contribute exactly the allowances
in Table~\ref{tab:recursionterms}. Their sum gives
$g(P)\le U_k(H)$.

If $r(P)\ge3$ is odd, there is no unique heap of minimum depth.
The second allowance in that table is therefore absent, giving
the refinement for $k\ge1$. At depth zero, Lemma~\ref{lem:odd}
gives
\[
 g(P)=1=U_0(H)-H,
\]
which proves the refinement there as well.
\end{proof}

\subsection{The bound from the smallest heap}

Expanding the recurrence adds the allowances from each depth:
\begin{equation}
 U_k(H)=
 H+\left\lfloor\frac H2\right\rfloor
 +\cdots+
 \left\lfloor\frac H{2^k}\right\rfloor
 +k+1.
 \label{eq:B}
\end{equation}
The first terms count the possible additional values at the
successive depths. The term $k+1$ comes from the $+1$ at the
base and at each recursive step.

Now choose $H=h_{\min}$, the smallest nonempty heap of the
original position. Its depth is at least the minimum depth:
\[
 k=k(P)\le v_2(h_{\min}).
\]
Thus $2^k$ divides $h_{\min}$, and every floor in
\eqref{eq:B} is an exact division. The contributions combine as
\[
 U_k(h_{\min})
 =
 \underbrace{
 h_{\min}+\frac{h_{\min}}2+\cdots+
 \frac{h_{\min}}{2^k}
 }_{\text{allowances from successive depths}}
 +
 \underbrace{k+1}_{\text{the base and recursive }+1\text{ terms}}.
\]

\begin{corollary}[The bound from the smallest heap]
\label{cor:closed}
For a nonempty position with smallest heap $h_{\min}$ and
minimum 2-adic depth $k$,
\begin{equation}
 \boxed{
 \begin{aligned}
 g(P)
 &\le U(h_{\min},k)
 :=2h_{\min}-\frac{h_{\min}}{2^k}+k+1 \\
 &\le C(h_{\min})
 :=2h_{\min}+\lfloor\log_2h_{\min}\rfloor.
 \end{aligned}
 }
 \label{eq:closed}
\end{equation}
If $r(P)\ge3$ is odd, the first bound improves by
$h_{\min}/2^k$.
\end{corollary}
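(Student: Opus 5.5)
The plan is to specialise Theorem~\ref{thm:bound} to $H=h_{\min}$, evaluate the expanded recurrence \eqref{eq:B} in closed form, and then maximise over the admissible depths $k$.

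\textbf{Step 1: applying the theorem.} The smallest nonempty heap $h_{\min}$ is a positive heap of size at most $H=h_{\min}$. Theorem~\ref{thm:bound} therefore gives
\[
g(P)\le U_{k}(h_{\min}),\qquad k=k(P).
\]
The minimum depth cannot exceed the depth of any individual heap, so $k\le v_2(h_{\min})$. Hence $2^k\mid h_{\min}$, and each floor in \eqref{eq:B} is the exact quotient $h_{\min}/2^i$ for $0\le i\le k$.

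\textbf{Step 2: the closed form.} The finite geometric sum evaluates as
\[
\sum_{i=0}^{k}\frac{h_{\min}}{2^i}=2h_{\min}-\frac{h_{\min}}{2^k}.
\]
Adding $k+1$ gives exactly $U(h_{\min},k)$, which is the first inequality of \eqref{eq:closed}.

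\textbf{Step 3: the refinement.} When $r(P)\ge3$ is odd, the refined bound \eqref{eq:refined} subtracts $\lfloor h_{\min}/2^k\rfloor$. By the divisibility from Step~1, this floor equals $h_{\min}/2^k$, which is the stated improvement.

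\textbf{Step 4: the second inequality.} It remains to show $U(h_{\min},k)\le 2h_{\min}+\lfloor\log_2 h_{\min}\rfloor$. Write $h_{\min}=2^{s}u$ with $u$ odd, so that $0\le k\le s$. We need
\[
k+1-\frac{h_{\min}}{2^k}\le\lfloor\log_2 h_{\min}\rfloor .
\]
The expression $f(k)=k+1-h_{\min}/2^k$ is increasing in $k$, since both the term $k$ and the term $-h_{\min}/2^k$ increase. It therefore suffices to check $k=s$, where $f(s)=s+1-u$. Because $u\ge1$, we get $f(s)\le s$. Because $2^s\le h_{\min}$, we have $s\le\lfloor\log_2 h_{\min}\rfloor$, and the inequality follows.

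The only point requiring care is Step~4, namely choosing the worst case $k=s$ and using $u\ge1$. The rest is direct substitution into results already established.
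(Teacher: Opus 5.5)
Your proof is correct and follows the paper's route. You specialise Theorem~\ref{thm:bound} to $H=h_{\min}$, use $2^k\mid h_{\min}$ to make every floor exact, sum the geometric series, and read the refinement off \eqref{eq:refined}. The only difference is cosmetic: for the second inequality the paper argues directly at the given $k$, using $h_{\min}/2^k\ge1$ and $k\le\lfloor\log_2h_{\min}\rfloor$, so your monotonicity reduction to the worst case $k=v_2(h_{\min})$ is valid but not needed.
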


\begin{proof}
The geometric sum gives the first expression.
The second inequality follows from
\[
 \frac{h_{\min}}{2^k}\ge1,
 \qquad
 k\le\lfloor\log_2h_{\min}\rfloor.
\]
The refinement follows from \eqref{eq:refined}.
\end{proof}

For example, when the smallest heap has size $4$, the bound
is built as follows:

\begin{table}[H]
\centering
\small
\begin{tabular}{@{}c c c@{}}
\toprule
Minimum 2-adic depth & Calculation & SG bound \\
\midrule
$0$ & $4+1$ & $5$ \\
$1$ & $5+4/2+1$ & $8$ \\
$2$ & $8+4/4+1$ & $10$ \\
\bottomrule
\end{tabular}
\caption{Building the bound when $h_{\min}=4$.}
\label{tab:examplebound}
\end{table}

At depth $2$, if the number of heaps at minimum depth is odd
and at least three, the allowance $4/4=1$ is absent and the
bound improves from $10$ to $9$.

The ceiling $C(h_{\min})$ depends only on the smallest heap.
Neither the number nor the sizes of the other heaps appear.

\begin{proposition}[A bound for every descendant]
\label{prop:descendantbound}
Let $P$ be a nonempty position and define
\[
 h_{\max}(P):=\max_i h_i.
\]
Every nonempty position $R$ reached from $P$ by any number of
legal moves satisfies
\begin{equation}
 \boxed{
 g(R)\le C\bigl(h_{\max}(P)\bigr)
 =
 2h_{\max}(P)
 +\left\lfloor\log_2 h_{\max}(P)\right\rfloor.
 }
 \label{eq:descendantbound}
\end{equation}
The empty position has SG value $0$.
\end{proposition}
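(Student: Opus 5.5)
The plan is to use Theorem~\ref{thm:bound}, which accepts any size limit $H$ such that the position contains a positive heap of size at most $H$. Set $H:=h_{\max}(P)$. The first step is a monotonicity observation: every legal move only decreases one heap, so each heap of a descendant $R$ is bounded by the corresponding heap of $P$. Consequently every nonempty heap of $R$ has size at most $h_{\max}(P)$. In particular, a nonempty $R$ contains a positive heap of size at most $H$, namely any of its heaps, for example its smallest one. This is why the largest heap of the starting board, rather than its smallest, is the right choice: the smallest heap of $P$ may be deleted along the chain, but the bound $H=h_{\max}(P)$ survives every move.

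Theorem~\ref{thm:bound} then gives $g(R)\le U_{k(R)}(H)$. The second step is to convert this into the closed ceiling $C(H)$. Here the floors in \eqref{eq:B} need not be exact divisions, since $2^{k(R)}$ need not divide $H$. So instead of the exact geometric sum used for Corollary~\ref{cor:closed}, I would argue more directly.

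First, the minimum depth of $R$ satisfies $2^{k(R)}\le h_{\min}(R)\le H$. Hence
\[
k(R)\le\lfloor\log_2 H\rfloor .
\]
Second, in the expansion \eqref{eq:B}, every term $\lfloor H/2^j\rfloor$ with $j\le k(R)$ is at least $1$. The sum of floors is bounded by the real geometric sum:
\[
\sum_{j=0}^{k}\left\lfloor\frac{H}{2^j}\right\rfloor\le H\Bigl(2-\frac1{2^k}\Bigr)=2H-\frac{H}{2^k}\le 2H-1,
\]
where the last step uses $H/2^k\ge1$. Adding the $k+1$ term gives
\[
U_{k}(H)\le 2H-1+k+1=2H+k\le 2H+\lfloor\log_2H\rfloor=C(H).
\]
The empty position trivially has value $0$.

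The only step needing care is the conversion to the closed form, because the exact formula of Corollary~\ref{cor:closed} is unavailable when $2^k\nmid H$. That is why I bound the floors by the real geometric series and use $2^{k(R)}\le H$, which comes from the smallest heap of $R$. One should also check that monotonicity of $U_j$ in $j$ is not needed here, since we apply the theorem at the actual depth $k(R)$.
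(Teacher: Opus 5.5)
Your proof is correct, but it takes a slightly different route from the paper's.

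The paper applies Corollary~\ref{cor:closed} to $R$ itself, with the size limit $H=h_{\min}(R)$. Since $2^{k(R)}\mid h_{\min}(R)$, every floor is an exact division, which gives $g(R)\le C(h_{\min}(R))$. It then uses $h_{\min}(R)\le h_{\max}(P)$ and the fact that $C$ is nondecreasing.

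You instead feed $H=h_{\max}(P)$ directly into Theorem~\ref{thm:bound}, so you must handle floors that are not exact. You do this correctly: you compare with the real geometric series and use $2^{k(R)}\le h_{\min}(R)\le H$.

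The paper's route is shorter because it reuses the exact closed form and avoids any floor estimate. The difficulty you flag when $2^{k}\nmid H$ arises only from your choice of $H$, and choosing $H=h_{\min}(R)$ removes it. Your route avoids the monotonicity of $C$. It also proves a slightly more general conversion, $U_k(H)\le 2H+k\le C(H)$ whenever $2^k\le H$, which shows that the closed ceiling does not depend on the floors being exact.
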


\begin{proof}
A legal move never increases a heap. Therefore every heap in a
nonempty descendant $R$ has size at most $h_{\max}(P)$, and in
particular
\[
 h_{\min}(R)\le h_{\max}(P).
\]
Corollary~\ref{cor:closed} gives
\[
 g(R)\le C\bigl(h_{\min}(R)\bigr)
 \le C\bigl(h_{\max}(P)\bigr),
\]
because $C$ is nondecreasing. The empty position has value $0$.
\end{proof}

\section{Bounding periods}
\label{sec:periods}

The fixed board is $A=(h_1,\ldots,h_{m-1})$, and $n$ is the
varying heap. The smallest fixed heap is $h_{\min}$; $m$ includes
the varying heap. We use the eventual period $\omega_A$ defined
in Section~\ref{sec:intro}: from some starting size onward,
increasing $n$ by $\omega_A$ preserves every SG value.

Our bound has the same structure as Morisawa's: the number of
possible blocks of consecutive SG values multiplied by the
comparison spacing, now denoted by $W$. We bound each component
using the smallest fixed heap, apart from the SG periods of
direct successor boards.

\subsection{The ingredients of the period bound}

As before, a move changes either the varying heap or a fixed heap.
Write $A\xrightarrow{d}Q$ when removing $d$ stones from one
heap of $A$ produces $Q$ and $d$ divides every other heap of $A$.
On the full board $(A,n)$, the removal must also divide $n$.

\begin{center}
\begin{tabularx}{\linewidth}{@{}L{0.30\linewidth}Y@{}}
\toprule
Successor value and meaning & Expression and reason \\
\midrule
$g(A,n-d)$\newline
Value after changing the varying heap
  & The removal must divide every fixed heap. Thus $d\mid D$,
    where $D=\gcd(h_1,\ldots,h_{m-1})$. This successor uses
    an earlier SG value in the same sequence. \\
\addlinespace[9pt]
$g(Q,n)$\newline
Value after changing a fixed heap
  & The move must satisfy $A\xrightarrow{d}Q$ and $d\mid n$.
    Its SG value comes from the smaller fixed board $Q$
    with the varying heap attached. \\
\bottomrule
\end{tabularx}
\end{center}

Combining the two types of successor values gives, for $n\ge D$,
\begin{equation}
 g(A,n)=\mex\left(
 \{g(A,n-d):d\mid D\}\ \cup\
 \{g(Q,n):A\xrightarrow{d}Q,\ d\mid n\}\right).
 \label{eq:periodrec}
\end{equation}

The largest removal from $n$ is $D$, so a block of $D$
consecutive SG values supplies every earlier value needed for
the next calculation. Each entry has at most $C(h_{\min})+1$
possibilities. The comparison spacing $W$ makes legal moves
and the SG values of direct successor boards repeat together.

\begin{center}
\begin{tabularx}{\linewidth}{@{}L{0.30\linewidth}Y@{}}
\toprule
Quantity and meaning & Expression and reason \\
\midrule
$0,\ldots,C(h_{\min})$\newline
SG value range
  & Our SG bound gives
    $C(h_{\min})=2h_{\min}+\lfloor\log_2h_{\min}\rfloor$.
    Including zero gives $C(h_{\min})+1$ possibilities
    for each entry in a block. \\
\addlinespace[9pt]
$J(h_{\min}),\ J_A$\newline
Increases in $n$ that repeat legal moves
  & Every fixed-heap removal is at most $h_{\min}$, so
    $J(h_{\min})=\operatorname{lcm}(1,\ldots,h_{\min})$
    is a multiple of every such removal. With at least two
    fixed heaps, every removal also divides Morisawa's $M_A$.
    We therefore use the greatest common divisor
    \begin{equation}
      J_A=\gcd\bigl(M_A,J(h_{\min})\bigr),
      \label{eq:legality}
    \end{equation}
    which is a common multiple of every legal fixed-heap removal
    amount. With one fixed heap, use $J_A=J(h_{\min})$. \\
\addlinespace[9pt]
$D$\newline
Number of consecutive SG values compared
  & The largest varying-heap removal is
    $D=\gcd(h_1,\ldots,h_{m-1})\le h_{\min}$.
    The block
    \[
      \bigl(g(A,n-D+1),\ldots,g(A,n)\bigr)
    \]
    supplies all earlier values needed to compute $g(A,n+1)$. \\
\bottomrule
\end{tabularx}
\end{center}

These three quantities are bounded using only $h_{\min}$. The SG periods
of direct successor boards complete the comparison spacing:

\begin{center}
\begin{tabularx}{\linewidth}{@{}L{0.30\linewidth}Y@{}}
\toprule
Quantity and meaning & Expression and reason \\
\midrule
$\omega_{Q_1},\ldots,\omega_{Q_t}$\newline
SG periods for direct successor boards
  & The list contains only the distinct nonempty boards
    $Q$ reached by $A\xrightarrow{d}Q$.
    Each $\omega_Q$ is a chosen eventual period of $g(Q,n)$.
    No other smaller fixed boards supply values in the recurrence. \\
\addlinespace[9pt]
$W$\newline
Comparison spacing
  & The required common multiple is
    \[
      W=\operatorname{lcm}(J_A,\omega_{Q_1},\ldots,\omega_{Q_t}).
    \]
    It repeats legal moves and all direct-successor SG values
    together, once their repeating parts have begun. \\
\bottomrule
\end{tabularx}
\end{center}

If a move deletes the last fixed heap, its successor is the sole
heap $n$, with SG value $n$. For $n>C(h_{\min})$, this value is
too large to affect the parent's mex. Thus no period is needed
for an empty fixed board. If there are no nonempty direct
successors, take $W=J_A$.

\begin{lemma}[Repeating legal moves]
\label{lem:legality}
For $n\ge h_{\min}$, increasing $n$ by $J_A$ preserves every
legal choice of heap and removal amount.
\end{lemma}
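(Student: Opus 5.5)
We need to show that for $n\ge h_{\min}$, a move from $(A,n)$ is legal exactly when the corresponding move from $(A,n+J_A)$ is legal. Here "corresponding" means the same heap and the same removal amount, with the varying heap identified across the two boards. We split by which heap is changed, using the two successor types listed before \eqref{eq:periodrec}.

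\emph{Moves on the varying heap.} Removing $d$ from $n$ is legal exactly when $d\le n$ and $d\mid h_j$ for every fixed heap. The divisibility condition, $d\mid D$, does not involve $n$. Since $D\le h_{\min}\le n<n+J_A$, the size condition $d\le n$ holds automatically on both boards once $d\mid D$. Hence these moves coincide. This is the only place the hypothesis $n\ge h_{\min}$ is used.

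\emph{Moves on a fixed heap $h_i$.} Such a move is legal exactly when $d\le h_i$, $d\mid h_j$ for all fixed $j\ne i$, and $d\mid n$. The first two conditions do not depend on $n$; call such a $d$ a candidate removal. It then suffices to show that every candidate $d$ divides $J_A$, since then $d\mid n \iff d\mid n+J_A$.
\begin{itemize}
\item First, $d\le h_{\min}$. If $h_i$ is not the smallest fixed heap, $d$ divides a heap of size $h_{\min}$, so $d\le h_{\min}$. If $h_i$ is a smallest heap, then $d\le h_i=h_{\min}$ directly. In either case $d\mid J(h_{\min})=\operatorname{lcm}(1,\ldots,h_{\min})$.
\item Second, with at least two fixed heaps, $d$ divides $\gcd\{h_j:j\ne i\}$, which divides $M_A$ by definition of the lcm.
\end{itemize}
Therefore $d\mid\gcd(M_A,J(h_{\min}))=J_A$. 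With one fixed heap there are no other fixed heaps to divide, and $J_A=J(h_{\min})$ is used; the first bullet alone gives $d\mid J_A$.

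There is no real obstacle. The only subtle point is that the condition $d\mid n$ must be tested for candidate removals, not only for moves legal at the original $n$. The divisibility $d\mid J_A$ therefore has to be proved for every candidate, which the argument above does because it never uses $d\mid n$.
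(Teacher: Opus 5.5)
Your proposal is correct and follows the paper's proof essentially step for step. You split into varying-heap moves (divisors of $D\le h_{\min}\le n$) and fixed-heap moves, then show each candidate removal satisfies $d\le h_{\min}$, hence $d\mid J(h_{\min})$, and also divides $M_A$ through the GCD of the other fixed heaps. Your remark that $d\mid J_A$ must hold for every candidate removal, independently of $d\mid n$, is a useful clarification that the paper leaves implicit.
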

\begin{proof}
A removal from a smallest fixed heap must fit in that heap.
A removal from any other fixed heap must divide the unchanged
smallest heap. In both cases $d\le h_{\min}$, so
$d\mid J(h_{\min})$.

With at least two fixed heaps, a removal from $h_i$ also divides
the GCD of the other fixed heaps. It therefore divides $M_A$
and hence $J_A$. Thus $d\mid n$ exactly when $d\mid n+J_A$.
With one fixed heap, $d\mid J_A$ holds by definition.
For the varying heap, the permitted removals are the divisors
of $D\le h_{\min}$, which already fit.
\end{proof}

There are at most $[C(h_{\min})+1]^D$ different blocks.
Comparing one more block than this number, at intervals of
$W$, guarantees two matching blocks. Their separation is an
eventual period, as the following proof shows.

\begin{theorem}[An eventual-period bound]
\label{thm:period}
There is an eventual period satisfying
\begin{equation}
 \boxed{\omega_A\le
 \underbrace{W}_{\text{comparison spacing}}
 \underbrace{[C(h_{\min})+1]^D}_{\text{number of possible blocks}}.}
 \label{eq:periodfine}
\end{equation}
In particular,
\begin{equation}
 \boxed{\omega_A\le
 \underbrace{\operatorname{lcm}(\omega_{Q_1},\ldots,\omega_{Q_t})}_{
 \text{SG periods of direct successors}}
 \underbrace{J(h_{\min})[C(h_{\min})+1]^{h_{\min}}}_{
 \text{depends only on }h_{\min}}.}
 \label{eq:periodseparated}
\end{equation}
The LCM of an empty list is taken to be $1$.
\end{theorem}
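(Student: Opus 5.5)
We argue by pigeonhole on blocks of $D$ consecutive SG values, sampled at spacing $W$, and then show that one matching pair propagates forever through the recurrence \eqref{eq:periodrec}. First fix a threshold $N_0\ge \max\{h_{\min},D,C(h_{\min})+1\}$ beyond which every direct-successor sequence $g(Q_i,n)$ is already periodic with its chosen period $\omega_{Q_i}$. Since $\omega_{Q_i}\mid W$, this gives
\[
g(Q_i,n+W)=g(Q_i,n)\qquad\text{for all }n\ge N_0 .
\]
Such an $N_0$ exists because there are finitely many $Q_i$, each with an eventual period. Lemma~\ref{lem:legality} applies because $n\ge h_{\min}$ and $J_A\mid W$: for $n\ge N_0$, the legal fixed-heap moves from $(A,n)$ and $(A,n+W)$ coincide (same heap, same $d$, same $Q$). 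The varying-heap removals are the divisors of $D$ in both positions. A successor that deletes the last fixed heap has value $n>C(h_{\min})$, so by Corollary~\ref{cor:closed} it cannot affect the mex in either position.

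Next, define the blocks $B(n)=(g(A,n-D+1),\ldots,g(A,n))$. By Corollary~\ref{cor:closed} applied to $(A,n)$, whose smallest heap is at most $h_{\min}$, each entry lies in $\{0,\ldots,C(h_{\min})\}$. Hence there are at most $[C(h_{\min})+1]^D$ possible blocks. Take $K=[C(h_{\min})+1]^D$ and compare $B(N_0+sW)$ for $s=0,1,\ldots,K$. Two of them agree, say for $s<s'$. Set
\[
p=(s'-s)W\le KW .
\]
We now prove by induction on $t\ge0$ that $g(A,n_0+t)=g(A,n_0+p+t)$, where $n_0=N_0+sW-D+1$. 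For $t<D$ this is the block match. For the inductive step we use \eqref{eq:periodrec} at $n=N_0+sW+u$ with $u\ge1$. The varying-heap successors $g(A,n-d)$ with $d\mid D$ have index at least $n-D$, which falls within the already-matched range. The fixed-heap successor values $g(Q,n)$ agree with $g(Q,n+p)$ because $W\mid p$ and $n\ge N_0$. The legal move sets agree by the previous paragraph, since $p$ is a multiple of $W$ and hence of $J_A$. The two mex sets are therefore equal, so $p$ is an eventual period and $\omega_A\le p\le W[C(h_{\min})+1]^D$, which is \eqref{eq:periodfine}.

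For \eqref{eq:periodseparated}, bound
\[
W=\operatorname{lcm}(J_A,\omega_{Q_1},\ldots)\le J_A\operatorname{lcm}(\omega_{Q_1},\ldots)\le J(h_{\min})\operatorname{lcm}(\omega_{Q_i}),
\]
using $J_A\mid J(h_{\min})$. Then use $D\le h_{\min}$ to replace the exponent. The empty-list convention covers the case $W=J_A$.

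The main obstacle is the bookkeeping in the inductive step, namely making sure every successor value is genuinely shifted by $p$. Three cases need checking. The deletion-of-last-fixed-heap successor has no period, and we neutralize it by the threshold $N_0>C(h_{\min})$. The one-fixed-heap case uses $J_A=J(h_{\min})$ via Lemma~\ref{lem:legality}. Finally, the threshold must be chosen large enough for all $Q_i$ simultaneously.
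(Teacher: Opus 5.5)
Your proposal is correct and follows essentially the same route as the paper. You run a pigeonhole on the at most $[C(h_{\min})+1]^D$ blocks of length $D$ sampled at spacing $W$ beyond a common threshold, propagate the match through \eqref{eq:periodrec} using Lemma~\ref{lem:legality} and the fact that a deleted-last-heap successor has value above $C(h_{\min})$, and then bound $W\le J(h_{\min})\operatorname{lcm}(\omega_{Q_1},\ldots,\omega_{Q_t})$ with $D\le h_{\min}$; your version only spells out the threshold and induction bookkeeping that the paper states briefly.
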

\begin{proof}
Write $N=[C(h_{\min})+1]^D$. Choose $n$ beyond the initial
segments of all successor periods, with $n>\max(D,C(h_{\min}))$.
Inspect the $N+1$ ordered blocks of $D$ consecutive SG values
ending at $n,n+W,\ldots,n+NW$. Two blocks agree; their endpoints
differ by $qW$ for some $1\le q\le N$.

At the next heap size, the fixed-heap moves supply the same values
because their legality and successor values repeat after $qW$.
The varying-heap moves refer to matching entries in the two blocks.
Thus the mex gives the same next value. Repeating the argument
proves that $qW$ is an eventual period. Finally,
$W\le J(h_{\min})\operatorname{lcm}(\omega_{Q_1},\ldots,\omega_{Q_t})$
and $D\le h_{\min}$ give the separated bound.
\end{proof}

\subsection{Comparison with Morisawa}

Morisawa's counting argument uses a larger SG range and longer
blocks. His comparison spacing $\omega'$ also includes periods
of all nonempty fixed boards with fewer total stones, whereas $W$
uses only direct successors. Assign the same chosen period to each
board in both constructions. Then $W\mid\omega'$.

\begin{table}[htbp]
\centering\small
\begin{tabularx}{\linewidth}{@{}L{0.28\linewidth}YY@{}}
\toprule
Quantity & Morisawa's proof & Present bound\\
\midrule
SG value range & $0,\ldots,mh_{\min}$ & $0,\ldots,C(h_{\min})$\\
\addlinespace[7pt]
Number of consecutive SG values compared & $M_A+1$ & $D\le h_{\min}$\\
\addlinespace[7pt]
Number of possible blocks & $(mh_{\min}+1)^{M_A+1}$ &
$[C(h_{\min})+1]^D$\\
\addlinespace[7pt]
Increase in $n$ that repeats legal moves & $M_A$ & $J_A\mid J(h_{\min})$\\
\addlinespace[7pt]
SG periods included & All nonempty fixed boards with
fewer total stones, using at most $m-1$ heaps & Only the nonempty
direct successors $Q_1,\ldots,Q_t$\\
\addlinespace[7pt]
Comparison spacing & $\omega'$ & $W$\\
\addlinespace[7pt]
Period upper bound & $\omega'(mh_{\min}+1)^{M_A+1}$ &
$W[C(h_{\min})+1]^D$\\
\bottomrule
\end{tabularx}
\caption{Comparison with the counting argument in Morisawa's proof
of Theorem 2.5 \cite[Section 8]{Morisawa2026}. Here $m\ge3$ includes
the varying heap.}
\label{tab:periods}
\end{table}

Since $C(h_{\min})<3h_{\min}\le mh_{\min}$ and $D\le M_A$,
both the SG value range and the number of consecutive SG values compared are reduced. Equation
\eqref{eq:periodseparated} shows exactly what remains: in that
uniform bound, only the LCM of successor-board periods is not
controlled by $h_{\min}$.

For a concrete comparison, take $A=(2,3)$. Its direct successors
are $(1,3)$, $(2,2)$, and $(1,2)$, with periods $2,4,2$.
Thus $h_{\min}=2$, $D=1$, $C(h_{\min})=5$, and
\[
 J_A=2,\qquad W=\operatorname{lcm}(2,2,4,2)=4.
\]
Morisawa's larger list also includes the single fixed heap $(4)$,
since $4<2+3$. Its sequence has period $24$; the other smaller
boards have periods $2$ or $4$. Using these least periods gives
$M_A=6$ and $\omega'=24$. The comparison is therefore
\[
\begin{array}{rcl}
 \text{Morisawa's counting bound}&=&24\cdot7^7=19{,}765{,}032,\\[3pt]
 \text{Present bound}&=&4\cdot6^1=24,\\[3pt]
 \text{Actual least period}&=&4.
\end{array}
\]

\subsection{Keeping the same bounds throughout the successor chains}

The smallest fixed heap may disappear. We therefore need a size
limit that stays valid after any number of moves on the fixed board.
The largest original fixed heap supplies that limit: no move can
increase a heap.

Throughout this construction, $h_{\max}$ is the largest fixed heap
in the original board $A$. It excludes the varying heap $n$ and
stays fixed throughout the induction. Every nonempty fixed descendant $Q$ still contains a heap no
larger than $h_{\max}$, even when $n$ is arbitrarily large.
Thus $h_{\min}(Q,n)\le h_{\max}$.
Corollary~\ref{cor:closed}, together with the monotonicity of $C$,
therefore gives
\begin{equation}
 \boxed{g(Q,n)\le C(h_{\max})
 =2h_{\max}+\lfloor\log_2h_{\max}\rfloor.}
 \label{eq:descendantvalues}
\end{equation}
The current board can use its smaller bound $C(h_{\min})$.
The bound $C(h_{\max})$ keeps one size limit valid for every
nonempty fixed descendant.

The same replacement works for all three local ingredients of
the period argument:

\begin{table}[H]
\centering
\begin{tabularx}{\linewidth}{@{}L{0.30\linewidth}YY@{}}
\toprule
Quantity & Current fixed board $A$ & Nonempty fixed descendant $Q$ \\
\midrule
SG value ceiling
 & $C(h_{\min})$
 & $C(h_{\max})$ \\
\addlinespace[9pt]
Number of previous SG values needed
 & $D_A=\gcd(A)\le h_{\min}$
 & $D_Q=\gcd(Q)\le h_{\max}$ \\
\addlinespace[9pt]
Increase in $n$ that repeats legal moves
 & $J(h_{\min})$ suffices
 & $J(h_{\max})$ suffices \\
\bottomrule
\end{tabularx}
\caption{One set of bounds remains valid even after the smallest
fixed heap disappears. The varying heap $n$ is not included in
$h_{\max}$.}
\label{tab:uniformperiodterms}
\end{table}

Each descendant has at least one positive fixed heap at most
$h_{\max}$. This bounds its SG value, its largest varying-heap
removal, and every fixed-heap removal. Once $n\ge h_{\max}$,
all varying-heap removals fit, and adding $J(h_{\max})$ preserves
the divisibility tests on $n$.

If a move removes the last fixed heap, the successor has SG value
$n$. For $n>C(h_{\max})$, that value exceeds the parent's ceiling
and cannot affect its mex. Thus only nonempty fixed successors
need periods.

\subsection{From a block bound to one common period}

At each board, the earlier argument compares blocks of consecutive
SG values. The table above bounds both the length of a block and
the number of choices for each entry. It therefore bounds the
number of different blocks using $h_{\max}$ alone.

Suppose $T$ repeats all direct-successor SG sequences and legal
moves. Two matching blocks, compared at intervals of $T$, give a
parent period $qT$. The block count bounds $q$. To cover every
possible $q$, we take the LCM of all integers up to that bound.

\begin{table}[H]
\centering
\begin{tabularx}{\linewidth}{@{}L{0.30\linewidth}Y@{}}
\toprule
Quantity and meaning & Expression and reason \\
\midrule
$N_{\max}$\newline
Bound on the number of possible blocks
  & Each entry has at most $C(h_{\max})+1$ possible SG values,
    and at most $h_{\max}$ entries are needed. Thus
    \[
      N_{\max}=[C(h_{\max})+1]^{h_{\max}}.
    \]
    Comparing $N_{\max}+1$ blocks gives two matching blocks
    separated by $qT$, where $1\le q\le N_{\max}$. \\
\addlinespace[9pt]
$K(h_{\max})$\newline
Common multiple of all possible multipliers
  & The multiplier $q$ need not divide $N_{\max}$. Instead,
    use a number divisible by every possible $q$:
    \[
      K(h_{\max})=\operatorname{lcm}(1,\ldots,N_{\max})
      =J(N_{\max}).
    \]
    Every possible period $qT$ then divides $T K(h_{\max})$. \\
\bottomrule
\end{tabularx}
\caption{A bound on the block count gives a common multiple of
all periods that one recursive layer may produce.}
\label{tab:commonmultipliers}
\end{table}

The distinction between a bound and a common multiple matters.
For example, $3\le4$ does not mean that $3$ divides $4$.
Although a parent period is at most $T N_{\max}$, that number
need not itself be a period. The number $T K(h_{\max})$ is a
period because it is a multiple of every possible $qT$.

\begin{lemma}[One common period for the next layer]
\label{lem:commonlayer}
Suppose every heap of a nonempty fixed board is at most
$h_{\max}$, and $T$ is a positive multiple of $J(h_{\max})$.
If $T$ is an eventual period of every nonempty direct-successor
sequence, then $T K(h_{\max})$ is an eventual period of the
parent sequence.
\end{lemma}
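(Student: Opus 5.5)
The plan is to rerun the block-comparison argument of Theorem~\ref{thm:period}, with $T$ as the comparison spacing in place of $W$ and with the uniform bounds of Table~\ref{tab:uniformperiodterms} in place of the $h_{\min}$ bounds. Let the fixed board be $B$, with every heap at most $h_{\max}$. Put $D_B=\gcd(B)\le h_{\max}$ and $N=[C(h_{\max})+1]^{D_B}\le N_{\max}$.

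\textbf{Step 1: choose a starting size.} Choose $n_0$ beyond the initial segments of all nonempty direct-successor sequences, measured with respect to the common period $T$. Also require $n_0>\max(h_{\max},C(h_{\max}))$. Then every varying-heap removal fits. Moves that delete the last fixed heap reach the value $n>C(h_{\max})$, so they cannot affect the mex, since by \eqref{eq:descendantvalues} the parent value is at most $C(h_{\max})$.

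\textbf{Step 2: legality and successor values repeat.} For $n\ge n_0$, every fixed-heap removal $d$ satisfies $d\le h_{\max}$, because $d$ either fits in the changed heap or divides an unchanged one. Hence $d\mid J(h_{\max})\mid T$. So $d\mid n$ exactly when $d\mid n+sT$ for every $s\ge0$. The successor values $g(Q,n)$ also agree with $g(Q,n+sT)$, because $T$ is an eventual period of each nonempty direct successor and $n$ lies beyond its initial segment. Thus the second set in \eqref{eq:periodrec} is invariant under shifts by multiples of $T$.

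\textbf{Step 3: find matching blocks.} Consider the $N+1$ blocks of $D_B$ consecutive values ending at $n_0,n_0+T,\dots,n_0+NT$. By \eqref{eq:descendantvalues} each entry lies in $\{0,\dots,C(h_{\max})\}$. Two blocks therefore coincide, with endpoints differing by $qT$ for some $1\le q\le N\le N_{\max}$.

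\textbf{Step 4: propagate the match.} Induct forward from the matched blocks, using \eqref{eq:periodrec}. The varying-heap removals are the divisors of $D_B$, so they read entries of the current block. The fixed-heap successors repeat by Step 2. The mex is therefore equal at the next size, and the matched block advances by one. This shows that $qT$ is an eventual period of the parent sequence.

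\textbf{Step 5: pass to the common period.} The number $q\le N_{\max}$ divides $K(h_{\max})=J(N_{\max})$. Any positive multiple of an eventual period is again an eventual period, since iterating $g(n+t)=g(n+qT+t)$ gives $g(n+t)=g(n+sqT+t)$. Hence $TK(h_{\max})$ is an eventual period of the parent sequence.

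The main care point is Step 1: a single threshold must work for all successors simultaneously, and the empty-fixed-board successor must be handled. There are finitely many successors, so the threshold is the maximum of their thresholds, and the empty case was settled in Step 1. A second care point is that Step 2 needs $d\mid T$ for removals from a nonsmallest heap. This holds because such a $d$ divides an unchanged heap of size at most $h_{\max}$.
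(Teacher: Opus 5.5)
Your proposal is correct and is essentially the paper's proof: the paper also reruns the block comparison of Theorem~\ref{thm:period} with $T$ as the comparison spacing, bounds the resulting multiplier by $1\le q\le N_{\max}$, and concludes because $q\mid K(h_{\max})$. You simply make explicit what the paper leaves implicit. These details are the single threshold beyond all successor initial segments, the irrelevance of the successor that deletes the last fixed heap (its value $n$ exceeds $C(h_{\max})$), and the fact that every fixed-heap removal divides $J(h_{\max})$, which divides $T$.
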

\begin{proof}
Use $T$ as the comparison spacing in
Theorem~\ref{thm:period}. The actual block count is at most
$N_{\max}$, so the proof gives a period $qT$ with
$1\le q\le N_{\max}$. Since $q\mid K(h_{\max})$,
$T K(h_{\max})$ is a multiple of this period and is a period
too. This argument applies after all successor sequences have
reached their repeating parts.
\end{proof}

Within one layer, the number of successor boards adds no further
factor. If they all have a common period $T$, their values all
repeat after $T$, whether there are three successors or three
thousand. The next factor comes from continuing the parent's
own recurrence, rather than from counting these successors.

\subsection{Why the recursive bound contains a power}

LCM combines the successor periods at one layer. Moving up to a
parent can introduce a new multiplier \emph{outside} that LCM.
This is why a bound on one layer does not yet bound every layer
by the same number.

First consider a chain in which each layer uses the same multiplier
$N$. This illustrates the accumulation; the actual multipliers
may be different. Write $J$ for the increase that repeats legality.
If the starting period is $w_0=JN$, then
\[
 \begin{aligned}
 w_1&=\operatorname{lcm}(J,JN)\,N=JN^2,\\
 w_2&=\operatorname{lcm}(J,JN^2)\,N=JN^3.
 \end{aligned}
\]
The LCM keeps the previous power of $N$, because $JN^2$ already
contains $J$. The new multiplier then adds another factor of $N$.

For example, if $J=6$ and $N=4$, the calculation is:

\begin{table}[H]
\centering
\begin{tabular}{@{}l c r@{}}
\toprule
Layer & Calculation & Illustrative value \\
\midrule
Starting layer & $6\cdot4$ & $24$ \\
\addlinespace[6pt]
Next layer & $\operatorname{lcm}(6,24)\cdot4$ & $96$ \\
\addlinespace[6pt]
Following layer & $\operatorname{lcm}(6,96)\cdot4$ & $384$ \\
\bottomrule
\end{tabular}
\caption{Illustrative multiplier accumulation. These numbers
explain the recurrence; they are not computed Divisor Nim periods.}
\label{tab:layerillustration}
\end{table}

Repeated entries in $\operatorname{lcm}(J,N,N,N)$ add nothing.
In the constant-multiplier illustration above, each new layer
instead multiplies by another factor of $N$, giving
$JN,JN^2,JN^3$. These equalities are exact for that illustration,
and every term has the same prime divisors as $JN$.

The actual block argument does not establish that every multiplier
equals $N_{\max}$. If $T$ repeats legal moves and all direct-successor
SG sequences, the argument produces a parent period $qT$, where
\[
 1\le q\le N_{\max},
 \qquad qT\le T N_{\max}.
\]
This inequality bounds the period's size. It does not imply that
$qT$ divides $T N_{\max}$, or that its prime divisors divide
$T N_{\max}$.

To obtain a guaranteed common period, we use
\[
 K(h_{\max})=\operatorname{lcm}(1,\ldots,N_{\max}).
\]
Since every possible $q$ divides $K(h_{\max})$, the period $qT$
divides $T K(h_{\max})$. Thus $T K(h_{\max})$ is an eventual
period regardless of which multiplier occurs. This divisibility
is what supports the common-period induction and the subsequent
prime-factor conclusion.

We now build that guaranteed period layer by layer. The fixed
board $(1)$ starts the induction. Each move on a fixed heap
removes at least one stone, so the total fixed stones minus one
bounds the number of layers before only one fixed stone remains.

\begin{table}[H]
\centering
\begin{tabularx}{\linewidth}{@{}L{0.30\linewidth}Y@{}}
\toprule
Quantity and meaning & Expression and reason \\
\midrule
$T_0(h_{\max})$\newline
Starting common period
  & The sequence $g(1,n)$ has period $2$. Combine it with
    the increase that repeats legal moves:
    \[
      T_0(h_{\max})=\operatorname{lcm}(2,J(h_{\max})).
    \]
    This is a period for $(1,n)$ and repeats legality
    for every fixed board under the same size limit. \\
\addlinespace[9pt]
$s(A)$\newline
Number of recursive layers allowed
  & For a nonempty fixed board,
    \[
      s(A)=\sum_{h\in A}h-1.
    \]
    Repeated heaps are counted separately. A nonempty
    fixed successor $Q$ satisfies $s(Q)\le s(A)-1$.
    The case $s=0$ is exactly the board $(1)$.
    This counts fixed stones, excluding $n$; it is not
    the number of common halvings. \\
\bottomrule
\end{tabularx}
\caption{The base of the induction and a bound on its length.}
\label{tab:periodlayers}
\end{table}

For a fixed value of $h_{\max}$, the common periods build as follows:

\begin{table}[H]
\centering
\begin{tabularx}{\linewidth}{@{}c Y L{0.30\linewidth}@{}}
\toprule
Layer bound & Why it covers the next board & Guaranteed common period \\
\midrule
$0$ & The fixed board is $(1)$.
 & $T_0(h_{\max})$ \\
\addlinespace[8pt]
$1$ & All nonempty successors are covered at layer $0$;
one more layer contributes a factor $K(h_{\max})$.
 & $T_0(h_{\max})K(h_{\max})$ \\
\addlinespace[8pt]
$2$ & All nonempty successors are covered by the preceding
common period; apply the same one-layer argument.
 & $T_0(h_{\max})K(h_{\max})^2$ \\
\addlinespace[8pt]
$s$ & Every successor has a smaller layer bound, so
$s$ factors cover the whole induction.
 & $T_0(h_{\max})K(h_{\max})^s$ \\
\bottomrule
\end{tabularx}
\caption{The exponent counts repeated applications of the
one-layer period argument, not the number of successors at a layer.}
\label{tab:periodpowers}
\end{table}

Combining the starting period with one common multiplier per
allowed layer gives the explicit period.

\begin{proposition}[An explicit period without successor periods]
\label{prop:explicitperiod}
Fix a positive heap-size limit $h_{\max}$. For every nonempty
fixed board $A$ whose heaps are at most this limit,
\begin{equation}
 \boxed{
 \underbrace{T_0(h_{\max})}_{\text{starting common period}}
 \quad\cdot\quad
 \underbrace{K(h_{\max})^{s(A)}}_{\text{one factor per allowed layer}}
 }
 \label{eq:explicitperiod}
\end{equation}
is an eventual period of $g(A,n)$. Its least eventual period satisfies
\begin{equation}
 \boxed{\omega_A\mid T_0(h_{\max})K(h_{\max})^{s(A)}.}
 \label{eq:explicitdivisibility}
\end{equation}
\end{proposition}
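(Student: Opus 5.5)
The plan is to prove this by strong induction on the layer count $s(A)$, keeping $h_{\max}$ fixed throughout. Write $P_s=T_0(h_{\max})K(h_{\max})^{s}$. The induction hypothesis will be: for every nonempty fixed board $Q$ with all heaps at most $h_{\max}$ and $s(Q)\le s$, the number $P_s$ is an eventual period of $g(Q,n)$. The divisibility statement \eqref{eq:explicitdivisibility} will then follow from the standard fact that every eventual period is a multiple of the least one. If $\omega$ and $\omega'$ are eventual periods, then from a common starting size onward the sequence is invariant under shifts by both. Writing $\gcd(\omega,\omega')=a\omega-b\omega'$ with $a,b\ge0$ shows that the gcd is also an eventual period. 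Minimality of $\omega_A$ then forces $\omega_A\mid P_{s(A)}$.

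\emph{Base case.} If $s(A)=0$, then $A=(1)$. Here $g(1,n)$ alternates with period $2$: by Lemma~\ref{lem:odd} and Proposition~\ref{prop:halving}, or directly, $g(1,n)=0$ for odd $n$ and $1$ for even $n\ge2$. Since $2\mid T_0(h_{\max})$, the number $T_0(h_{\max})=P_0$ is an eventual period.

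\emph{Inductive step.} Take $A$ with $s(A)=s\ge1$ and all heaps at most $h_{\max}$. Every nonempty direct successor $Q$ (with $A\xrightarrow{d}Q$) has heaps at most $h_{\max}$, since moves never increase heaps, and satisfies $s(Q)\le s-1$. By the induction hypothesis, $P_{s-1}$ is an eventual period of each $g(Q,n)$. There are finitely many successors, so we take the maximum of their starting sizes. Moreover $P_{s-1}$ is a multiple of $J(h_{\max})$, because $J(h_{\max})\mid T_0(h_{\max})$.

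Lemma~\ref{lem:commonlayer} now applies with $T=P_{s-1}$ and gives that $P_{s-1}K(h_{\max})=P_s$ is an eventual period of $g(A,n)$. Before invoking it, I will check its hypotheses in the present setting. First, by Table~\ref{tab:uniformperiodterms} and Lemma~\ref{lem:legality} applied with the limit $h_{\max}$ in place of $h_{\min}$, adding $J(h_{\max})$ preserves legal moves once $n\ge h_{\max}$: every fixed-heap removal is at most the smallest fixed heap of $A$, which is at most $h_{\max}$. Second, block entries lie in $0,\ldots,C(h_{\max})$ by \eqref{eq:descendantvalues}, and block length is $D_A\le h_{\max}$, so the block count is at most $N_{\max}$. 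Third, any move deleting the last fixed heap yields value $n>C(h_{\max})$ for large $n$ and so is irrelevant to the mex. This case arises only when $A$ is a single heap.

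\emph{Main obstacle.} The main obstacle is bookkeeping rather than mathematics. Theorem~\ref{thm:period} was phrased with $W$ built from $J_A$ and $C(h_{\min})$, so I must confirm that its proof works verbatim with any common period $T$ that is a multiple of a legality-repeating increase, and with the cruder parameters $C(h_{\max})$ and $h_{\max}$. The multiplier $q\le N_{\max}$ then divides $K(h_{\max})$, exactly as in Lemma~\ref{lem:commonlayer}. I also need the single-heap board $A=(h)$ with $h>1$ to be covered. Here the "empty successor" value $n$ is discarded for large $n$, and the nonempty successors are $(h-d)$, which are handled by the induction.

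Finally, the statement claims $P_{s(A)}$ itself, not merely $P_s$ for some $s\ge s(A)$. This is immediate because the induction is indexed exactly by $s(A)$. The hypothesis "$s(Q)\le s-1$" is used via $P_{s(Q)}\mid P_{s-1}$: a multiple of an eventual period is again an eventual period.
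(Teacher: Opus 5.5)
Your proposal is correct and follows the paper's proof. You induct on $s(A)$ with base board $(1)$, then apply Lemma~\ref{lem:commonlayer} with $T=T_0(h_{\max})K(h_{\max})^{s-1}$, which every successor period $T_0(h_{\max})K(h_{\max})^{s(Q)}$ divides, and finish with the fact that the least eventual period divides every eventual period. Your hypothesis checks and the B\'ezout argument for that last fact make explicit what the paper states only briefly.
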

\begin{proof}
Induct on $s=s(A)$, keeping the original size limit $h_{\max}$
fixed. If $s=0$, then $A=(1)$ and $g(1,n)$ is $1$ for even $n$
and $0$ for odd $n$. Thus $T_0(h_{\max})$ is a period.

For $s\ge1$, each nonempty direct successor $Q$ has
$s(Q)\le s-1$. By induction it has period
$T_0(h_{\max})K(h_{\max})^{s(Q)}$. All these periods divide
\[
 T=T_0(h_{\max})K(h_{\max})^{s-1}.
\]
Therefore $T$ repeats every nonempty successor sequence and
legal move. The one-layer lemma gives the parent period
\[
 T K(h_{\max})=T_0(h_{\max})K(h_{\max})^s.
\]
Finally, the least eventual period divides every eventual period:
on the repeating tail, any shift preserving the sequence must be
a multiple of its shortest cycle.
\end{proof}

For the original board, choose its actual largest fixed heap as
$h_{\max}$. The formula now uses only that heap and the total
fixed stones. It requires no separate calculation of successor
periods. Equation~\eqref{eq:periodfine} remains sharper when
those periods are known.

\subsection{What this proves about prime factors}

The number of layers can increase the \emph{powers} of prime
factors in this guaranteed period, but it cannot add primes
outside a fixed collection determined by $h_{\max}$.

Every prime dividing $K(h_{\max})=\operatorname{lcm}(1,\ldots,N_{\max})$
is at most $N_{\max}$. Also $N_{\max}\ge\max(2,h_{\max})$,
so $T_0(h_{\max})$ divides $K(h_{\max})$ and introduces no
other primes. Equation~\eqref{eq:explicitdivisibility} therefore gives:

\begin{corollary}[A fixed set of possible prime divisors]
\label{cor:primepool}
For fixed $h_{\max}$, every prime divisor of the least eventual
period of any board with fixed heaps at most $h_{\max}$ is at most
$N_{\max}=[C(h_{\max})+1]^{h_{\max}}$, independently of heap count.
\end{corollary}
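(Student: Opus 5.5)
The plan is to read the corollary off the divisibility~\eqref{eq:explicitdivisibility} in Proposition~\ref{prop:explicitperiod}. Let $A$ be a nonempty fixed board whose heaps are all at most $h_{\max}$, and let $p$ be a prime dividing $\omega_A$. The proposition gives $\omega_A\mid T_0(h_{\max})K(h_{\max})^{s(A)}$, so $p$ divides this product. By primality, $p$ then divides $T_0(h_{\max})$ or $K(h_{\max})$. The exponent $s(A)$, which is where the heap count and the total number of stones enter, only raises powers. It cannot change which primes occur. So it suffices to show that every prime dividing $T_0(h_{\max})$ or $K(h_{\max})$ is at most $N_{\max}$.

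For $K(h_{\max})=\operatorname{lcm}(1,\ldots,N_{\max})$, I would note that a prime divides an LCM of a list of integers only if it divides some member of the list. Every member here is at most $N_{\max}$, so every such prime is at most $N_{\max}$.

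For $T_0(h_{\max})=\operatorname{lcm}(2,J(h_{\max}))=\operatorname{lcm}(1,\ldots,\max(2,h_{\max}))$, I would show it divides $K(h_{\max})$. That reduces to the inequality $N_{\max}\ge\max(2,h_{\max})$. Since $h_{\max}\ge1$, we have $C(h_{\max})+1\ge 2h_{\max}+1\ge3$. Raising to the power $h_{\max}\ge1$ gives
\[
N_{\max}=[C(h_{\max})+1]^{h_{\max}}\ge 3^{h_{\max}}\ge\max(2,h_{\max}+1).
\]
Hence $\{1,\ldots,\max(2,h_{\max})\}\subseteq\{1,\ldots,N_{\max}\}$, and therefore $T_0(h_{\max})\mid K(h_{\max})$. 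In particular $T_0(h_{\max})$ contributes no new primes.

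Finally, I would observe that $N_{\max}$ depends only on $h_{\max}$: it involves neither the number of fixed heaps nor $s(A)$. This gives the stated independence from heap count.

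The one point needing care is the justification that the least eventual period divides the explicit period; without it the prime statement would fail. This is already supplied at the end of the proof of Proposition~\ref{prop:explicitperiod}, so I expect no real obstacle. The argument should just state explicitly that the prime set is uniform over all boards with the same $h_{\max}$.
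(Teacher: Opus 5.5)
Your proposal is correct and follows the paper's argument: you read off the primes from $\omega_A\mid T_0(h_{\max})K(h_{\max})^{s(A)}$, note that every prime dividing $K(h_{\max})=\operatorname{lcm}(1,\ldots,N_{\max})$ is at most $N_{\max}$, and absorb $T_0(h_{\max})$ into $K(h_{\max})$ using $N_{\max}\ge\max(2,h_{\max})$. Your explicit check $N_{\max}\ge 3^{h_{\max}}\ge\max(2,h_{\max}+1)$ simply supplies the inequality that the paper states without verification.
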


Adding heaps no larger than $h_{\max}$ keeps this same collection
of possible primes. It can change which primes in the collection
actually occur. The proof does not show that a prime disappearing
from one period can never return, or that its exponent cannot grow.
For example, powers $2,4,8,16$ all have the same prime divisor
but grow without bound if the exponents keep increasing.

Long successor chains also need not produce long periods.
Any number of fixed heaps of size $1$ gives period $2$, although
$s(A)$ grows with their number. The exponent in our bound allows
growth at every layer; it does not say the actual period grows.

One possible improvement is to retain only the successor values
that affect the mex. Values above the parent's SG ceiling can be
ignored. A successor sequence can also be omitted if each relevant
value it supplies is already supplied by a retained legal move at
that same $n$. This condition must hold throughout the repeating
tail. Bounding how many layers remain after these reductions is
one route toward a period independent of heap count.

\subsection{A conjecture and computed periods}

The explicit bound leaves prime exponents dependent on the number
of layers. The computations suggest a much smaller common period:
twice the LCM of all positive integers up to the largest fixed heap.

\noindent\textbf{Conjecture.}
For every nonempty fixed board $A$, its least eventual SG period satisfies
\begin{equation}
 \boxed{\omega_A\mid2J(h_{\max}(A))
 =2\operatorname{lcm}(1,\ldots,h_{\max}(A)).}
 \label{eq:maxconjecture}
\end{equation}
Equivalently, $2J(h_{\max}(A))$ is an eventual period of $g(A,n)$.
The largest fixed heap excludes $n$. The starting size at which
the repetition begins may still depend on the whole fixed board.

This is stronger than asking for some bound $F(h_{\max})$:
it gives a particular common period. It also specifies the
possible prime powers. For an odd prime $p$, the LCM contains
the largest power $p^a\le h_{\max}$. Multiplication by $2$
allows one extra factor of $2$.

\begin{table}[H]
\centering
\begin{tabularx}{\linewidth}{@{}L{0.25\linewidth}YY@{}}
\toprule
Quantity & Proved here & Proposed in the conjecture \\
\midrule
Common eventual period
 & $T_0(h_{\max})K(h_{\max})^{s(A)}$
 & $2J(h_{\max})$ \\
\addlinespace[9pt]
Possible prime divisors
 & At most $N_{\max}$, independent of heap count
 & At most $h_{\max}$, apart from $2$ when $h_{\max}=1$ \\
\addlinespace[9pt]
Powers of those primes
 & The bound can grow with $s(A)$
 & Bounded using $h_{\max}$ alone \\
\bottomrule
\end{tabularx}
\caption{The conjecture would remove the remaining recursive-layer
dependence, not merely restrict the set of possible primes.}
\label{tab:periodconjecturecomparison}
\end{table}

As a baseline, the least eventual periods with one fixed heap are
listed below. Morisawa records the values for $h=1,\ldots,6$ from
earlier work and proves the value for $h=7$
\cite[Theorem 3 and Corollary 1]{Morisawa2025}. Our recurrence
calculation reproduces these entries and gives the displayed values
through $h=15$. These are least periods, not upper bounds.

\begin{table}[H]
\centering
\begin{tabular}{@{}c r@{}}
\toprule
Fixed heap $h$ & Least eventual period of $g(h,n)$\\
\midrule
$1$ & $2$\\
$2$ & $4$\\
$3$ & $2$\\
$4$--$6$ & $24$\\
$7$ & $120$\\
$8$--$10$ & $1{,}680$\\
$11$ & $5{,}040$\\
$12$--$14$ & $55{,}440$\\
$15$ & $720{,}720$\\
\bottomrule
\end{tabular}
\caption{Least eventual periods with one fixed heap. Ranges group
consecutive heap sizes having the same period.}
\label{tab:singleperiods}
\end{table}

Repetition was certified by checking a common shift that preserves
all divisibility tests and matching blocks for every nonempty fixed
descendant. Induction on fixed stones and then on $n$ extends the
matching blocks indefinitely. For $h\le15$, a common shift of
$720{,}720$ passes these checks from $n=10{,}000$. Testing the divisors
of that shift within a full cycle gives each least period above.

We also checked all fixed boards in the following finite ranges.
Heap order does not matter, but repeated sizes are included.

\begin{table}[H]
\centering
\begin{tabular}{@{}c c r@{}}
\toprule
Allowed fixed heap sizes & Maximum fixed heap count & Boards checked \\
\midrule
$1,\ldots,8$ & $12$ & $125{,}969$ \\
$1,\ldots,12$ & $6$ & $18{,}563$ \\
$1,\ldots,14$ & $4$ & $3{,}059$ \\
$1,\ldots,16$ & $3$ & $968$ \\
\bottomrule
\end{tabular}
\caption{Certified period checks supporting
\eqref{eq:maxconjecture}. The ranges overlap.}
\label{tab:periodcoverage}
\end{table}

After removing overlaps, these tests contain $143{,}059$ distinct
nonempty fixed boards. Every computed least period divides
$2J(h_{\max}(A))$, using that board's own largest fixed heap.
The checks certify eventual repetition as above; they do not
prove the conjecture for arbitrarily many heaps.

The single-fixed-heap period need not itself be a universal period
for all boards under that heap limit. For example,
\[
 g(2,3,n)=
 \begin{cases}
 0,&n\text{ odd},\\
 2,&n\equiv0\pmod4,\\
 3,&n\equiv2\pmod4.
 \end{cases}
\]
Its period is $4$, whereas $g(3,n)$ has period $2$. Both divide
the conjectured common period $2J(3)=12$.

Adding heaps need not change the least period in one direction.
The following certified example first shortens the period and
then partly reverses that reduction:

\begin{table}[H]
\centering
\begin{tabular}{@{}l r@{}}
\toprule
Fixed board $A$ & Least eventual period of $g(A,n)$\\
\midrule
$(8)$ & $1{,}680$\\
$(1,8,8,8)$ & $2$\\
$(1,4,8,8,8)$ & $12$\\
\bottomrule
\end{tabular}
\caption{Adding fixed heaps can shorten a period and then increase it.}
\label{tab:periodjoins}
\end{table}

In the last step, the smallest heap stays $1$, the largest stays
$8$, the GCD stays $1$, and the LCM stays $8$. Every divisor of
the added heap $4$ already divides $8$, yet the period rises from
$2$ to $12$. For these rows, the common shift $1{,}680$ was checked
from $n=216$ on all fixed descendants; full-cycle checks then give
the least periods. Thus unchanged heap extremes and divisors do
not guarantee an unchanged period. The conjecture asks instead
for a common period that survives arbitrary additions of heaps
of size at most $h_{\max}$.

\section{Composition: exact results, then bounds}
\label{sec:composition}

Join two boards by putting all their heaps under one shared divisor
rule, and write the result as $A\join B$. A move on a heap from $A$
must now satisfy the restrictions from $B$ too. This is different
from playing two independent games side by side: the join changes
the legal moves themselves.
Consequently, $g(A\join B)$ need not equal $g(A)\mathbin{\oplus}g(B)$,
where $\oplus$ is the bitwise exclusive-or used for independent games.

\subsection{Exact Grundy values when the join has at least two odd heaps}

The numbers $o_A$ and $o_B$ count the odd heaps in $A$ and $B$
before any halving. Joining the boards adds these counts.
If the total is at least two, its parity gives the exact SG value.

\begin{proposition}[Exact SG composition in the odd-heap case]
\label{prop:oddjoin}
If $o_A+o_B\ge2$, then
\[
 \boxed{g(A\join B)=
 \begin{cases}
 0,&o_A+o_B\text{ is even},\\
 1,&o_A+o_B\text{ is odd}.
 \end{cases}}
\]
\end{proposition}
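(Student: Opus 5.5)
The plan is to reduce the statement directly to Lemma~\ref{lem:odd}. The join $A\join B$ is an ordinary Divisor Nim position: it is the multiset of all heaps of $A$ and $B$ under the single rule \eqref{eq:rule}. Odd heaps are counted heap by heap, and each heap keeps its size when the boards are joined. So the number of odd heaps in $A\join B$ is exactly $o_A+o_B$. Nothing about the game itself needs to be reproved; we only check that the hypotheses of Lemma~\ref{lem:odd} hold for the joined position.

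The first step is to note that $A\join B$ is nonempty and contains at least two odd heaps, since $o_A+o_B\ge2$. We then split on the parity of $o_A+o_B$.

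In the even case, the count is even and at least $2$. The table in Lemma~\ref{lem:odd} gives $g(A\join B)=0$. Equivalently, $k(A\join B)=0$ and $r(A\join B)=o_A+o_B$ is even, so Proposition~\ref{prop:halving} applies.

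In the odd case, the count is odd and at least $2$, hence at least $3$. The table gives $g(A\join B)=1$. For completeness we can also recall the mechanism: every legal removal must divide some unchanged odd heap, so it is odd. Each move therefore changes the parity of exactly one heap, and the successor has an even number of odd heaps that is still at least $2$. Such a successor has value $0$, and the mex of $\{0\}$ is $1$.

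The only point that needs any care is the remark that the shared divisor rule changes the available moves. The exact value therefore cannot come from $g(A)$ and $g(B)$ separately, for instance via $\oplus$. It has to come from applying Lemma~\ref{lem:odd} to the joined position, and the argument does exactly that. Beyond this, the proof is a direct case check.
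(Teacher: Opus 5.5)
Your proposal is correct and matches the paper's proof: the paper likewise applies Lemma~\ref{lem:odd} to the joined board, whose odd-heap count is $o_A+o_B$. Your recap of the mechanism (every removal must be odd, so each move changes the odd count by exactly one) is accurate but not needed.
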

\begin{proof}
Apply Lemma~\ref{lem:odd} to the joined board.
\end{proof}

One odd heap from each board is enough. The sizes of those heaps and
the sizes and number of even heaps do not change this value.

This also gives a replacement rule. A board $A$ with at least two
odd heaps can be replaced by $(1,1)$ if its odd count is even, or by
$(1,1,1)$ if its odd count is odd. For every future board $B$, the
original and replacement joins retain at least two odd heaps and
have the same odd-count parity. Their SG values therefore agree.

\Needspace{16\baselineskip}
\subsection{Exact winning and losing outcomes for every join}

For each nonempty board, retain its minimum 2-adic depth and
the number of heaps at that depth. These are $(k_A,r_A)$ for
$A$ and $(k_B,r_B)$ for $B$. If one board reaches an odd heap
first under common halving, it determines where halving stops
for the join. If both reach odd heaps together, their counts add.

\begin{center}
\begin{tabularx}{\linewidth}{@{}L{0.30\linewidth}Y@{}}
\toprule
Quantity and meaning & Expression and reason \\
\midrule
$k$\newline
Minimum 2-adic depth of the join
  & $k=\min(k_A,k_B)$. Common halving stops as soon as
    either board has an odd heap. \\
\addlinespace[9pt]
$r$\newline
Number of joined heaps at depth $k$
  & Only heaps at the smaller depth are counted; if the
    depths agree, both boards contribute:
    \[
      r=\begin{cases}
      r_A,&k_A<k_B,\\
      r_B,&k_B<k_A,\\
      r_A+r_B,&k_A=k_B.
      \end{cases}
    \] \\
\bottomrule
\end{tabularx}
\end{center}

Combining these quantities with Morisawa's outcome criterion gives
\begin{equation}
 \boxed{g(A\join B)=0\quad\Longleftrightarrow\quad r\text{ is even}.}
 \label{eq:outcomejoin}
\end{equation}
An odd count means the join is winning.

For any number of boards, take the smallest depth and add the counts
from boards at that depth. This does not depend on the order or
grouping of the joins, so the composition rule is commutative and
associative. Empty boards can be omitted.

Only the parity of $r$ is needed for outcomes. Two boards with the
same minimum depth and the same parity of the count at that depth can replace one another
in every future join without changing its outcome. A board joined to
either one has smaller, equal, or larger depth; in each case,
the count rule gives the same final parity.

Thus full SG composition is exact in the odd-heap case, and outcome
composition is exact for all boards. For the remaining positive SG
values, we use the bounds already proved.

\subsection{The SG bounds compose too}

The SG bound uses two inputs: the smallest heap $h_{\min}$
and the minimum 2-adic depth $k$. Joining boards takes the
minimum of each input. We already have $k$ from the outcome
calculation; adding the smallest heap gives the bound.

\begin{table}[H]
\centering
\begin{tabularx}{\linewidth}{@{}L{0.30\linewidth}Y@{}}
\toprule
Quantity and meaning & Expression and reason \\
\midrule
$h_{\min}$\newline
Smallest heap of the join
  & $h_{\min}=\min\bigl(h_{\min}(A),h_{\min}(B)\bigr)$.
    The smallest heap in the combined board is the smaller
    of the two original smallest heaps. \\
\addlinespace[9pt]
$U(h_{\min},k)$\newline
SG ceiling for the join
  & Substituting the joined smallest heap and minimum depth
    into the bound from Section~\ref{sec:values} gives
    \[
      U(h_{\min},k)=2h_{\min}-\frac{h_{\min}}{2^k}+k+1.
    \]
    The count $r$ then determines whether the value is
    exactly zero or the ceiling can be tightened. \\
\bottomrule
\end{tabularx}
\caption{The quantities used to bound the SG value of a join.}
\label{tab:joinsummary}
\end{table}

The ceiling $U(h_{\min},k)$ increases with $h_{\min}$,
since its coefficient $2-2^{-k}$ is positive. It also increases with
$k$, because
\[
 U(h_{\min},k+1)-U(h_{\min},k)=h_{\min}/2^{k+1}+1>0.
\]
Taking the minimum of each input therefore cannot increase the ceiling.

\Needspace{13\baselineskip}
\begin{proposition}[A bound preserved by joining]
\label{prop:boundjoin}
For nonempty boards $A,B$, let $(h_{\min},k,r)$ be their joined summary.
Then
\begin{equation}
 \boxed{g(A\join B)\le U(h_{\min},k)
 \le\min\{U(h_{\min}(A),k_A),U(h_{\min}(B),k_B)\}.}
 \label{eq:boundjoin}
\end{equation}
If $r$ is even, the exact value is $0$. If $r\ge3$ is odd, the first
ceiling improves to $U(h_{\min},k)-h_{\min}/2^k$.
\end{proposition}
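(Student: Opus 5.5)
The plan is to treat the joined board $A\join B$ as an ordinary nonempty position and apply Corollary~\ref{cor:closed} to it directly. No new game analysis is needed: the join is a single position under the divisor rule, so the bound already proved for every nonempty position applies. The work is in identifying its inputs. The smallest heap of $A\join B$ is $h_{\min}=\min(h_{\min}(A),h_{\min}(B))$, since the heap multiset is the union. Its minimum 2-adic depth is $k=\min(k_A,k_B)$, and the count $r$ at that depth is given by the case rule in the table preceding \eqref{eq:outcomejoin}. Both follow from the definitions \eqref{eq:depth}: a heap of the join attains the minimum depth exactly when it attains the minimum within its own board and that board's depth equals $k$. Corollary~\ref{cor:closed} then gives $g(A\join B)\le U(h_{\min},k)$, which is the first inequality.

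For the second inequality, I will use the monotonicity of $U$ already observed in the text. First, $U$ is nondecreasing in $h_{\min}$ for fixed $k$, because the coefficient $2-2^{-k}$ is positive. Second, $U$ is increasing in $k$ for fixed $h_{\min}$, by the displayed difference $h_{\min}/2^{k+1}+1>0$. Then, for $A$,
\[
 U(h_{\min},k)\le U(h_{\min}(A),k)\le U(h_{\min}(A),k_A),
\]
since $h_{\min}\le h_{\min}(A)$ and $k\le k_A$; the same argument applies to $B$. Taking the minimum gives \eqref{eq:boundjoin}.

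One subtlety needs care. In the closed form, $U(h,k)$ was derived with $2^k\mid h$, so the floors were exact divisions. For the joined board this holds automatically, since $k\le v_2(h_{\min})$. For the intermediate quantity $U(h_{\min}(A),k)$, however, I will simply treat $U$ as the real-valued formula. Then the monotonicity computations are purely algebraic, and no divisibility is needed in the comparison. This is the step I expect to be the main obstacle, and it is mild: the comparison must be read as an inequality between the explicit formulas, not between recursively defined $U_k(H)$ values.

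Finally, the two refinements follow from earlier results. If $r$ is even, Proposition~\ref{prop:halving} (equivalently \eqref{eq:outcomejoin}) gives $g(A\join B)=0$. If $r\ge3$ is odd, the refinement clause of Corollary~\ref{cor:closed}, coming from \eqref{eq:refined} in Theorem~\ref{thm:bound} with $H=h_{\min}$, applied to the joined position, lowers the ceiling by $h_{\min}/2^k$.
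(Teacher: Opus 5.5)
Your proposal is correct and follows the paper's proof. Both apply Corollary~\ref{cor:closed} and the outcome criterion \eqref{eq:outcomejoin} to the joined board, then use the monotonicity of $U$ in each input for the comparison. The divisibility subtlety you flag does not actually arise: $k\le k_A\le v_2(h_{\min}(A))$ gives $2^k\mid h_{\min}(A)$, so the intermediate $U(h_{\min}(A),k)$ is already an exact instance of the closed form.
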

\begin{proof}
Apply Corollary~\ref{cor:closed} and \eqref{eq:outcomejoin}
to the joined board. The comparison with the two original ceilings follows from
the increase of $U$ in each input.
\end{proof}

The smallest heap and smallest depth may come from different boards,
so their combination can improve on both original ceilings. For
example, let $A=(4,16)$ and $B=(6,16)$. Their ceilings are $10$ and
$11$. The join takes $h_{\min}=4$ from $A$ and $k=1$ from $B$, giving $8$.
Adding heaps of size $2$ then illustrates the role of the count.

\begin{table}[H]
\centering\small
\begin{tabular}{@{}lcccc@{}}
\toprule
Board & $h_{\min}$ & $k$ & $r$ & SG bound or exact value\\
\midrule
$A=(4,16)$ & $4$ & $2$ & $1$ & $10$\\
$B=(6,16)$ & $6$ & $1$ & $1$ & $11$\\
$A\join B$ & $4$ & $1$ & $1$ & $8$\\
$A\join B\join(2,2)$ & $2$ & $1$ & $3$ & $5-1=4$\\
$A\join B\join(2,2,2)$ & $2$ & $1$ & $4$ & $0$ (exact)\\
\bottomrule
\end{tabular}
\caption{The positive entries are upper bounds. The last entry is
the exact losing value.}
\label{tab:joinexample}
\end{table}

In the row with $r=3$, Corollary~\ref{cor:closed} improves
$U(2,1)=5$ by $h_{\min}/2^k=2/2=1$, giving the bound $4$.

Finally, start with any fixed nonempty board $A$ and keep adding
boards. Repeated application of \eqref{eq:boundjoin} gives
\begin{equation}
 \boxed{g(A\join B_1\join\cdots\join B_t)\le U(h_{\min}(A),k_A)
 \quad\text{for every }t\ge0.}
 \label{eq:forever}
\end{equation}
The actual value may rise or fall as heaps are added. The ceiling
provided by the original board remains valid, regardless of how
many additional heaps appear or how large they are.

\subsection{Composing period bounds}

For periodicity, $A$ and $B$ are fixed boards and $n$ is one
varying heap attached to their join. The explicit common period
uses the largest fixed heap and the total fixed stones. Joining
boards takes the larger heap size and adds the stone counts.

\begin{center}
\begin{tabularx}{\linewidth}{@{}L{0.30\linewidth}Y@{}}
\toprule
Quantity and meaning & Expression and reason \\
\midrule
$h_{\max}$\newline
Largest fixed heap of the join
  & The largest heap comes from one of the two boards:
    \[
      h_{\max}=\max(h_{\max}(A),h_{\max}(B)).
    \]
    This same size limit remains valid in every fixed
    descendant of the join. \\
\addlinespace[9pt]
$s(A\join B)$\newline
Joined number of allowed layers
  & Each $s$ is the total fixed stones minus one. Hence
    \[
      s(A\join B)=s(A)+s(B)+1.
    \]
    The extra $1$ corrects for subtracting one in each
    input count. \\
\bottomrule
\end{tabularx}
\end{center}

Substituting these joined quantities gives a common period
without calculating any successor periods.

\begin{corollary}[A period bound for joined boards]
\label{cor:periodjoin}
For nonempty fixed boards $A,B$, use their joined largest heap
$h_{\max}$. Then
\begin{equation}
 \boxed{\omega_{A\join B}\mid
 T_0(h_{\max})K(h_{\max})^{s(A)+s(B)+1}.}
 \label{eq:periodjoin}
\end{equation}
The expression on the right is a guaranteed eventual period.
\end{corollary}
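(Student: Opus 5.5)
The plan is to apply Proposition~\ref{prop:explicitperiod} directly to the joined fixed board $A\join B$. We only need to confirm that its two inputs take the values listed in the table above. The join $A\join B$ is a nonempty fixed board, since $A$ and $B$ are nonempty. Its heaps are exactly the heaps of $A$ together with those of $B$, with repeated sizes counted separately. Hence every heap of the join is at most
\[
 h_{\max}=\max(h_{\max}(A),h_{\max}(B)),
\]
and this value is attained. So $h_{\max}$ is an admissible size limit in Proposition~\ref{prop:explicitperiod}; it is in fact the join's own largest fixed heap.

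Next, compute the layer count. The total fixed stones of the join are the sum of the totals of $A$ and $B$. Therefore
\[
 s(A\join B)=\Bigl(\sum_{h\in A}h+\sum_{h\in B}h\Bigr)-1=(s(A)+1)+(s(B)+1)-1=s(A)+s(B)+1.
\]
Substituting this and $h_{\max}$ into \eqref{eq:explicitperiod} shows that $T_0(h_{\max})K(h_{\max})^{s(A)+s(B)+1}$ is an eventual period of $g(A\join B,n)$. The divisibility \eqref{eq:periodjoin} then follows exactly as \eqref{eq:explicitdivisibility} did: on the repeating tail, the least eventual period divides every eventual period.

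There is no real obstacle here, because the whole argument is a substitution. The one point that needs care is that Proposition~\ref{prop:explicitperiod} is proved by induction with the size limit held fixed throughout all fixed descendants. We must check that the join's descendants stay within the limit $h_{\max}$. This holds because moves never increase heaps, and that is already built into the proposition's hypothesis. One should also note that the shared divisor rule on the join changes which moves are legal, but not the form of the recurrence \eqref{eq:periodrec}. Proposition~\ref{prop:explicitperiod} applies to any fixed board, so the join needs no separate treatment.
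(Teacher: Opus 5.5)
Your proposal is correct and matches the paper's own proof: apply Proposition~\ref{prop:explicitperiod} to $A\join B$ with the joined largest heap $h_{\max}=\max(h_{\max}(A),h_{\max}(B))$ and the layer count $s(A\join B)=s(A)+s(B)+1$. Your further remarks, that descendants stay within the size limit and that the shared divisor rule needs no separate treatment, are already covered by the proposition's hypotheses and are harmless.
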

\begin{proof}
Apply Proposition~\ref{prop:explicitperiod} to $A\join B$
and use the two quantities in the table.
\end{proof}

For several boards, take the largest fixed heap and add all fixed
stones. These operations do not depend on the grouping or order
of the joins. If heaps no larger than $h_{\max}$ are added, the
possible prime collection remains unchanged. The guaranteed
period can still grow through its exponent $s$.

Under Conjecture~\eqref{eq:maxconjecture}, the join would instead
have common period $2J(h_{\max})$, independent of those stone
counts. This would compose a period guarantee from the largest
fixed heap alone. It would not imply that the actual least period
always increases or always decreases when boards are joined.

\section{Conclusions and open questions}
\label{sec:questions}

The smallest nonempty heap bounds the current SG value by
$C(h_{\min})=2h_{\min}+\lfloor\log_2h_{\min}\rfloor$,
independently of all other heap sizes and counts. The largest
initial heap supplies a bound for every descendant. Joining
boards preserves and can tighten the SG ceiling; minimum
2-adic depth and its count parity determine every joined outcome.

For periods, the smallest fixed heap controls the local terms.
The largest original fixed heap keeps one set of bounds valid
throughout the successor chains. The resulting common period is
\[
 T_0(h_{\max})K(h_{\max})^{s(A)},
 \qquad s(A)=\sum_{h\in A}h-1.
\]
LCM combines all successors at one layer without a factor for
their number. Moving up a layer may add another multiplier,
which is why the exponent still depends on the total fixed stones.
The possible prime divisors nevertheless belong to one finite
set determined by $h_{\max}$ alone.

The two principal conjectures are:

\begin{center}
\begin{tabularx}{\linewidth}{@{}L{0.30\linewidth}Y@{}}
\toprule
Conjecture and meaning & Precise claim and significance \\
\midrule
Sharper SG bound
 & For every nonempty position $P$,
   \[
     \boxed{g(P)\le2h_{\min}(P).}
   \]
   This would remove the remaining logarithmic term. \\
\addlinespace[9pt]
Common period independent of heap count
 & For every nonempty fixed board $A$,
   \[
     \boxed{\omega_A\mid2J(h_{\max}(A)).}
   \]
   Here $J(h_{\max})=\operatorname{lcm}(1,\ldots,h_{\max})$
   and $h_{\max}(A)$ excludes the varying heap $n$.
   This would bound both the primes and their powers using
   the largest fixed heap alone. \\
\bottomrule
\end{tabularx}
\end{center}

The period conjecture holds for every board in the certified
finite ranges reported above. A proof must still explain why
arbitrarily many fixed heaps cannot keep adding new powers
through recursive layers. Removing successor values that do not
affect the mex, and understanding repeated copies of a heap,
may help control those layers. The size at which eventual
repetition begins may still depend on the whole fixed board.

\end{document}